\newlength{\defbaselineskip}
\definecolor{mygreen}{rgb}{0,0.6,0}
\definecolor{mygray}{rgb}{0.5,0.5,0.5}
\definecolor{mymauve}{rgb}{0.58,0,0.82}
\tiny\color{mygray}, 
\title{\bf \textsf{Holistic Cube Analysis: A Query Framework for \\Data Insights}}
\newcommand{\abstractcube}{\text{\rm AbstractCube}}
\newcommand{\btcube}{\text{\rm BaseTableGroupByCube}}
\DeclareMathOperator*{\cellsetaggr}{{\sf AbstractAggregate}}
\DeclareMathOperator*{\NaiveCubeCrawl}{{\sf NaiveCubeCrawl}}
\DeclareMathOperator*{\TopDownCubeCrawl}{{\sf TopDownCubeCrawl}}
\author{%
  {\sf \small Xi Wu, Shaleen Deep$^*$, Joe Benassi, Fengan Li$^\ddagger$, Yaqi Zhang$^\ddagger$, Uyeong Jang, James Foster, Stella Kim,} \\
  {\sf \small Yujing Sun, Long Nguyen, Stratis Viglas, Somesh Jha$^\dagger$, John Cieslewicz, Jeffrey F. Naughton$^\ddagger$}\\
  {\sf \small Google, $^*$Microsoft GSL, $^\dagger$UW-Madison, $^\ddagger$Celonis}
}
\date{}
\begin{document}

\maketitle

\begin{abstract}
\label{sec:abstract}
Many data insight questions can be viewed as \emph{searching} in a large space
of tables and finding important ones, where the notion of importance is defined
in some adhoc user defined manner. This paper presents {\bf Ho}listic {\bf C}ube {\bf A}nalysis (\textbf{HoCA}),
a framework that \emph{augments} the capabilities of relational queries for such problems.
HoCA first augments the relational data model and introduces a new data type AbstractCube,
defined as a \emph{function} which maps a region-features pair to a relational table
(a region is a tuple which specifies values of a set of dimensions).
AbstractCube provides a \emph{logical form} of data,
and HoCA operators are cube-to-cube transformations.
We describe two basic but fundamental HoCA operators,
\emph{cube crawling} and \emph{cube join} (with many possible extensions).
Cube crawling explores a region space,
and outputs a cube that maps regions to signal vectors.
Cube join, in turn, is critical for \emph{composition},
allowing one to join information from different cubes for deeper analysis.
Cube crawling introduces two novel \emph{programming} features,
(programmable) \emph{Region Analysis Models} (RAMs) and \emph{Multi-Model Crawling}.
Crucially, RAM has a notion of \emph{population features},
which allows one to go beyond only analyzing local features at a region,
and program \emph{region-population} analysis that compares region and population features,
capturing a large class of importance notions.
HoCA has a rich algorithmic space, such as optimizing crawling and join performance,
and physical design of cubes.
We have implemented and deployed HoCA at Google.
Our early HoCA offering has attracted more than 30 teams
building applications with it, across a diverse spectrum of fields
including system monitoring, experimentation analysis, and business intelligence.
For many applications, HoCA empowers novel and powerful analyses,
such as instances of recurrent crawling, which are challenging to achieve otherwise.
\end{abstract}


\section{Introduction}
\label{sec:intro}
Many data insight questions can be viewed as \emph{searching} in a large space
of tables (sometimes called slices) and finding important ones,
where the notion of importance is defined in some adhoc user-defined manner. 
For example, Figure~\ref{fig:most-important-contributros-from-base-table} starts with
a top-level CostPerClick change, and asks what are \emph{important} contributors to this change at slice level;
Figure~\ref{fig:most-important-outliers-from-base-table}, instead,
asks what are \emph{important} outliers in a large space of slices.
For both analysis questions, one needs to consider a space of regions
(a \emph{region} is a tuple which specifies values for a set of dimensions,
for example, [Device=Pixel6] is a region with 1 dimension,
[Device=Pixel6, Browser=Chrome] is region with 2 dimensions),
and for each region one needs to consider a set of \emph{features}
([IsToday, Cost, Clicks] for CostPerClick contributors,
and [Date, Revenue] for interesting-outliers).
Regions and features induce a space of tables,
and instead of only analyzing features of each region \emph{locally}, 
these tables usually need to be analyzed altogether, or holistically, to uncover insights.

We note that, with classic declarative queries (e.g., SQL, OLAP),
which are largely designed to answer forward-style questions
(i.e., from region and features to a table, such as ``Extract Revenue timeseries data at Browser=Chrome''),
one needs to construct complex and esoteric pipelines to answer the search questions
(such as the ones shown above).
To this end, we note that an array of systems (e.g.,~\cite{AKSGXSASM21, ABDGMNRS18, WDM15, WM13})
have been built to simplify \emph{data explanation} tasks.
However, these systems do not cover questions such as important-outliers
in Figure~\ref{fig:most-important-outliers-from-base-table},
even though the workflow is similar to CostPerClick-explanation
(Figure~\ref{fig:most-important-contributros-from-base-table}).

\begin{figure*}[htb]
  \centering
    \begin{subfigure}{0.48\textwidth}
    \centering
    \includegraphics[width=\linewidth]{
      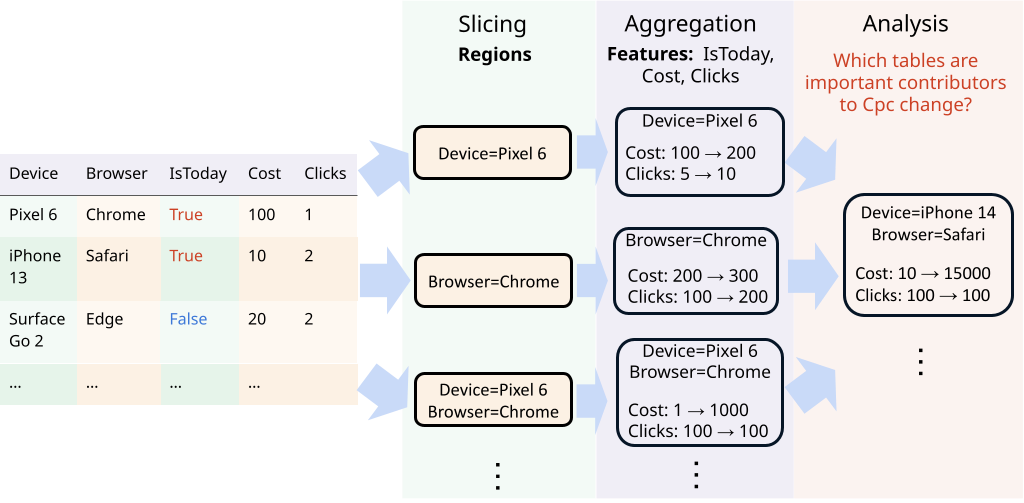}
    \caption{
      Finding important contributors to
      overall CostPerClick change, starting from a base table.
      While seemingly simple, a principled answer
      requires applying the AumannShapley method,
      see Section~\ref{sec:metric-attribution}.
    }
    \label{fig:most-important-contributros-from-base-table}
  \end{subfigure}\quad
  \begin{subfigure}{0.48\textwidth}
    \centering
    \includegraphics[width=\linewidth]{
    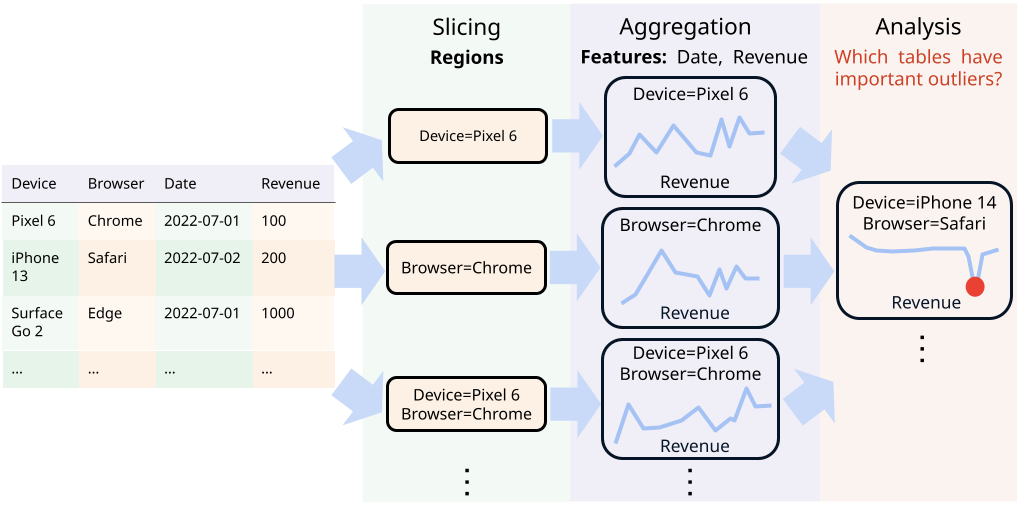}
    \caption{Finding important timeseries outliers,
      starting from a base table: Given a large space of slices,
      what are \emph{important} timeseries outliers one must pay attention to?
    }
    \label{fig:most-important-outliers-from-base-table}
  \end{subfigure}\quad\quad
  \caption{\small
    Many data insights problems can be viewed as:
    Given a \emph{space of tables} (often called slices)
    that are aggregated at different granularity,
    find interesting, or most attention-worthy slices
    (where these notions often need to be defined in an ad-hoc manner).
    Currently, to perform such analysis, analysts often need to build complex workflows to generate the tables,
    and then analyze them using advanced statistical and ranking techniques.
    By contrast, Figure~\ref{fig:hoca-programs-for-example-questions}
    shows that one can answer these questions using succinct HoCA programs.
  }
  \label{fig:example-hoca-questions}
\end{figure*}

\begin{figure*}[htb]
  \centering
  \begin{subfigure}{0.45\textwidth}
    \centering
    \includegraphics[width=\linewidth]{
    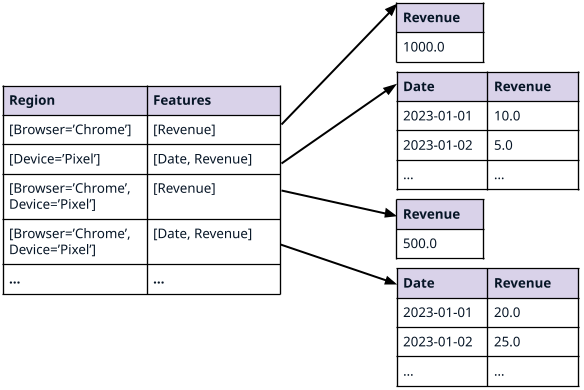}
    \caption{AbstractCube: A function mapping region-features to
    relational tables. It provides a logical form of the data. In particular,
    HoCA programmers do not need to care about how the function mapping is computed.
    There are different sub-types of AbstractCube, giving different ways to encode
    functions (see Section~\ref{sec:abstract-cube}). For the same logic cube,
    one can also have different physical representations.
    }
    \label{fig:hoca-cube-func}
  \end{subfigure}\quad\quad
  \begin{subfigure}{0.50\textwidth}
    \centering
    \includegraphics[width=\linewidth]{
      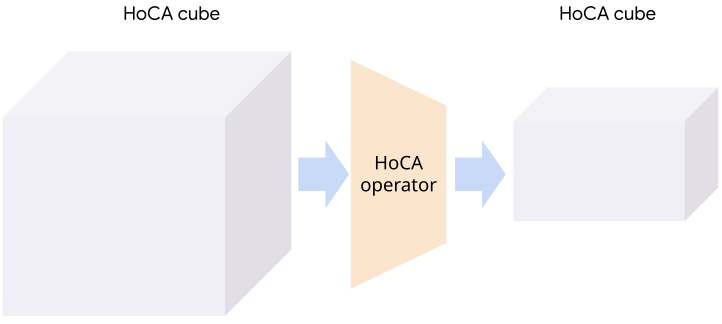}
    \caption{
      A HoCA operator: A cube-to-cube transformation
      (i.e., a higher-order, function-to-function transformation).
      A basic example of HoCA operator is \emph{cube crawling}:
      We traverse a region space, for each region do some analysis,
      and output a cube recording analysis results for regions.
      The ``function-as-data'' modeling of HoCA cubes is critical for \emph{composition}:
      Since the output is also a HoCA cube, we can apply further HoCA operators to the
      output cube (even though it may not be a projection-groupby-aggregation one).
      This gives rise to constructions such as Recurrent Cube Crawling
      (see Section~\ref{sec:recurrent-crawling}).
    }
    \label{fig:hoca-operators}
  \end{subfigure}
  \caption{\small
    \textbf{The HoCA paradigm}.
  }
  \label{fig:hoca-paradigm}
\end{figure*}

This paper presents \textbf{Ho}listic \textbf{C}ube \textbf{A}nalysis (\textbf{HoCA}),
a framework for systematically augmenting the capabilities of
relational queries for data insights.
Unlike the data-explanation systems mentioned above,
which basically augment SQL and still work on set-of-tuples,
HoCA first augments the relational data model by defining a new data type AbstractCube,
which is defined as a \emph{function} from RegionFeatures space to relational tables
(RegionFeatures space is a set of points, where each point specifies
a pair of region and features; Figure~\ref{fig:hoca-cube-func} gives a visualization).
AbstractCube provides a \emph{logical form} of data, and HoCA operators are cube-to-cube 
transformations, i.e., \emph{higher order}, function-to-function transformations
(Figure~\ref{fig:hoca-operators} provides a visualization).
Since input and output are of the same type, therefore, similar to relational algebra,
one can compose HoCA operators to form complex HoCA programs.

We describe two basic, but fundamental, HoCA operators:
\emph{Cube crawling} and \emph{cube join} (there are many more possibilities devising
more useful HoCA operators and extending the algebraic framework, which we discuss in
Section~\ref{sec:discussions}). Cube crawling introduces an enumeration-type operator,
which allows one to scan regions in a subspace, extract useful signals, and output
results into an output cube. Cube join, in turn, combines two input cubes into one,
by joining tables from two region-features spaces (see Figure~\ref{fig:cube-join}
for a visualization of the concept). Cube join is critical for \emph{composition},
since it allows one to join information from different cubes for deeper analysis,
and it is utilized in some of our most complex HoCA programs
(e.g., Recurrent Cube Crawling in Section~\ref{sec:recurrent-crawling}).

To solve a rich family of insight analytic problems,
we design cube crawling with two novel \emph{programming} features:
{\bf (1)} {\em Region Analysis Models (RAMs)},
which allows one to \emph{program} region analysis to be used in the crawl operator.
Crucially, RAM has a notion of \emph{population features},
which allows one to go beyond only analyzing local features at a region,
and program \emph{region-population} analysis that compares region and population features,
achieving a form of holistic analysis.
RAM abstraction allows easy integration and novel applications of advanced analytic techniques,
such as timeseries outlier detection (\cite{PyBSTS}), causal inference (\cite{BGKRS15}), and
differentiable programming (\cite{jax_2018_github}),
which goes much beyond previous work (e.g.~\cite{AKSGXSASM21}).
{\bf (2)} {\em  Multi-model crawling}, which allows one to apply multiple models,
potentially on {\em different} feature sets, during crawling.
Multi-model crawling significantly broadens the scope of analytic questions
one can answer with one crawl.

\begin{figure*}[htb]
  \centering
  \begin{subfigure}{0.45\textwidth}
    \centering
    \includegraphics[width=\linewidth]{
      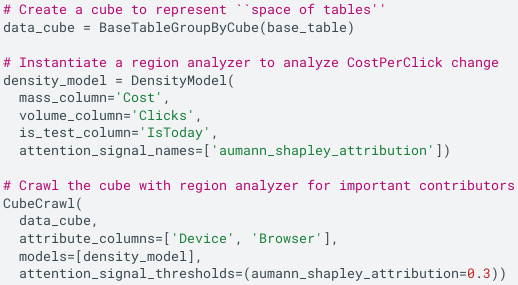}
    \caption{A HoCA program for finding important contributors to CostPerClick change.
      `aumann\_shapley\_attribution' is based on both region and population features.
      See Section~\ref{sec:metric-attribution} for details.}
    \label{fig:hoca-program-important-contributors}
  \end{subfigure}\quad\quad\quad
  \begin{subfigure}{0.45\textwidth}
    \centering
    \includegraphics[width=\linewidth]{
    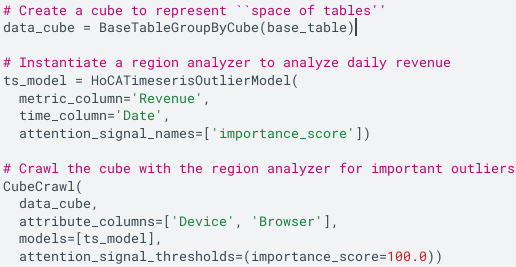}
    \caption{A HoCA program for finding important timeseries outliers.
    `importance\_score' is based on both region and population features. 
    }
    \label{fig:hoca-program-important-outliers}
  \end{subfigure}\quad\quad\quad
  \caption{\small
    HoCA programs that use cube crawling to answer the analytic questions raised in
    Figure~\ref{fig:example-hoca-questions}. In both cases, we analyze features from
    \emph{both the region and the entire population},
    which is holistic compared to analyzing region features alone.
  }
  \label{fig:hoca-programs-for-example-questions}
\end{figure*}

Together, these abstractions greatly simplified the analysis workflows aforementioned,
and also go further in capturing even more complex data insights problems.
Figure~\ref{fig:hoca-programs-for-example-questions} shows two HoCA programs that
answered formal versions of the two analytic questions in Figure~\ref{fig:example-hoca-questions}.
Each program only consists of three statements: Initializes a cube,
initializes an appropriate model, and then performs a cube crawling.
Since HoCA abstractions are compatible with standard relational constructions (e.g., SQL),
HoCA programmers can also construct more complex HoCA programs by composing HoCA operators,
together with standard relational constructions.

We have implemented and deployed HoCA at Google.
HoCA has a rich algorithmic space, such as optimizing crawling performance
(leveraging region space structure), optimizing cube join performance (local vs. global join),
and physical design of cubes (i.e., encoding of functions).
We briefly describe several implementations of cube crawling based on different foundations
(an in-house distributed relational query engine (\cite{F1Query}),
and also Apache Beam (\cite{PythonBeam})),
and evaluate their main performance characteristics.

HoCA has found many useful applications in system monitoring, experimentation analysis,
and business intelligence. For most of our applications, our users are system builders,
who use our new query framework to build data-insights systems.
Our early HoCA offering has attracted more than 30 teams directly dependent on our codebase.
In several applications, an interesting finding is that users had already built
adhoc solutions that can be captured by HoCA.
Migration to HoCA not only significantly simplified these constructions,
but also improved performance and scalability,
and enabled novel analyses (such as instances of recurrent crawling,
which would be challenging to achieve without HoCA concepts).
For Business Intelligence (BI) applications,
HoCA enabled fast development of novel data insights functionalities.
For example, somewhat surprisingly, we found that no previous work has even formulated
the problem of attributing a \emph{non-summable} metric change within a complex region space
(even for SUM/SUM ratio!). Within HoCA, we give a natural formulation
and develop a principled solution (Section~\ref{sec:metric-attribution}).

\textbf{HoCA vs. OLAP}.
Both HoCA and OLAP use the term ``cubes'', which indeed share some similarities.
For example, OLAP cubes can provide \emph{encoding} of HoCA cubes,
by providing mapping from region-features to relational tables.
A crucial difference, however, is that HoCA and OLAP examine
\emph{entirely different analytic questions}.
OLAP operators (e.g. drill-down, roll-up, etc.)
provide analytic capabilities in the ``forward'' direction
(e.g., region-features $\rightarrow$ table),
where users can leverage OLAP operators to perform an iterative, but manual,
slice-and-dice analysis to find data insights.
HoCA, on the other hand, provides operators to facilitate \emph{search} in the \emph{region space}.
For example in the CostPerClick attribution case described in Figure~\ref{fig:most-important-contributros-from-base-table},
HoCA starts with the \emph{top-level} CostPerClick change,
and the analysis deconstructs the metric and searches for the regions
that are major contributors to this change.
In this sense, HoCA operators are ``backwards''.
This difference in analytic questions induces vastly different considerations of
the cube design: In HoCA, cubes are \emph{logical form} of data,
and HoCA operators are higher-order, cube-to-cube transformations,
where they can be composed into a program,
with cubes flow in between \emph{explicitly} as data objects.
There is no counterpart of this in OLAP.
To this end, HoCA provides novel \emph{programming experiences} and \emph{constructions},
such as Region Analysis Model programming, cube join, and recurrent cube crawling.
A more detailed comparison is in Section~\ref{sec:olapcomparison} in~\cite{wu2023holistic}.

\textbf{Paper organization}.
Section~\ref{sec:framework} presents the HoCA framework of
HoCA cubes and operators.
Section~\ref{sec:power} gives various examples of programming
using HoCA to find data insights.
Section~\ref{sec:considerations} discusses important algorithmic considerations.
Section~\ref{sec:impl-eval} presents our current implementation of HoCA and its evaluations on several data sets.
Section~\ref{sec:discussions} discusses HoCA applications and future directions,
and we conclude in Section~\ref{sec:conclusion}.

\section{Holistic Cube Analysis framework}
\label{sec:framework}
This section presents the HoCA framework. We begin with some basic definitions:

$\bullet$ \textbf{Dimensions and measures}.
As in OLAP, we distinguish two kinds of columns: Dimensions and measures.
Dimensions are qualitative values that can be used to categorize and segment data.
Measures are numeric, quantitative values that one can measure.
For simplicity, in this paper, dimensions are synonymous with attributes,
and measures are synonymous with metrics.

$\bullet$ \textbf{Region}\label{def:region}.
A region is a tuple which specifies values of a set of dimensions
(e.g., (Device=`Pixel 5a') is a region,
(Device=`Pixel 5a', Browser=`Chrome') is another region).
For a region $\gamma$ and a dimension $d$,
we use $\gamma[d]$ to denote the value $d$ takes in $\gamma$.
We use $\dim(\gamma)$ to denote the set of dimensions in $\gamma$.
The number of dimensions in the region is called the \emph{degree} of the region
(e.g., (Device=`Pixel 5a') is degree-1,
(Device=`Pixel 5a', Browser=`Chrome') is degree-2).

$\bullet$ \textbf{Filtering a relation by a region}.
If $\bfB$ is a relation, we use $\bfB[\gamma]$ to denote
the set of tuples in $\gamma$ induced by a filtering:
For any $d$, $\bfB[\gamma]$ takes value $\gamma[d]$ for dimension $d$.
In SQL, this simply corresponds to a WHERE clause on $\bfB$.
For example, $\bfB[\text{Device=`Pixel 5a'}]$ considers tuples
where Device is `Pixel 5a'.

$\bullet$ \textbf{Features}.
A feature is a dimension or a measure.
For example, `Device' is a (dimension) feature,
and `Revenue' is a (measure) feature.
For example we can have $dims = [\text{`Device', `Browser'}]$, and
$measures = [\text{`Revenue'}]$.

With these, we discuss two main aspects of the HoCA framework:

\textbf{Abstract cube}.
In Section~\ref{sec:abstract-cube}, we define \abstractcube,
a data type defined as a \emph{function} mapping RegionFeatures space
to relational tables.
This function-as-data modeling allows us to simultaneously capture
a space of non-uniform tables in the co-domain of the function,
as well as the rich structure in the region space (e.g., hierarchies)
on the domain of the function. As a result,
operators on abstract cubes typically transform one cube to another.
That is, they are higher-order, function-to-function transformations.

\textbf{HoCA operators}.
In Section~\ref{sec:hoca-operators}, we describe two current HoCA operators:
\textbf{(i)} \emph{Cube crawling}, which explores a subspace of regions in a cube, 
extracting signals from each feature table visited,
and finally output a cube that maps regions to signal vectors.
\textbf{(ii)} \emph{Cube join}, which melds different cubes into one,
and is critical for composition.
\vskip -5pt
\subsection{AbstractCube: A function-as-data modeling}
\label{sec:abstract-cube}
Abstract cubes are foundational data objects in HoCA.
Our definition was inspired by the traditional OLAP cubes,
but have diverged significantly.
Specifically, an abstract cube is a function from RegionFeatures space
to relational tables. We give details below.

\begin{definition}[\textbf{RegionFeatures space}]
\label{def:region-features-space}
The RegionFeatures space is a set of points,
where each point in the RegionFeatures space defines
a pair of region and features.
\end{definition}

For example, suppose that we have three dimensions, Device, Browser, Date,
and one metric, Revenue. The following are some examples of
RegionFeatures points:

{\small
(region: [Browser=’Chrome’], features: [Date, Revenue])\\
(region: [Device=’Pixel 7’], features: [Date, Revenue])\\
(region: [Device=’Pixel 7’, Browser=’Chrome’], features: [Revenue])
}

Basically, region gives the dimension values to zoom into,
while features gives the dimensions and metrics to analyze.

\begin{definition}[\textbf{AbstractCube}]
  An abstract cube is a function that maps the RegionFeatures space to
  relational tables. Specifically, given a (region, features) point,
  the cube maps this pair to a relational table whose schema conforms to
  features (i.e., columns of the table are dimensions and measures in features)
\end{definition}

What this definition is saying is that in HoCA, the \emph{logical form} of data is a function
of type ``region-features $\mapsto$ relational table''. There can be, however,
many different \emph{encoding} of such functions, depending on different needs. In the following
we give three examples:
\textbf{BaseTableGroupByCube} (a natural encoding using a base relation and SQL groupby-aggregation),
\textbf{CellsetCube} (an encoding leveraging cellset and $\cellsetaggr$, Algorithm~\ref{alg:cellset-aggr}),
and \textbf{RandomizedCube} (an example of non-base table aggregatable cubes)

\textbf{BaseTableGroupByCube}.
The simplest example of cubes is that we have a base table,
and given a point of region-features,
we compute a groupby-aggregation from the base view to the region and features. 
For example, consider the point (region: [Device='Pixel 5a'], features=[Date, Revenue]).
Then we filter by Pixel 5a, then GroupBy Date, and aggregate Revenue (so Browser is aggregated away).
As a result, we get relational table daily Revenue timeseries for Pixel 5a.
Algorithm~\ref{alg:basetable-aggr} gives more details.
\begin{algorithm}[!htb]
  \small
  \caption{\small GetView: Function that maps a region-features to a table
  for \btcube.}
  \begin{algorithmic}[1]
    \CLASSMEMBERS
      $\bfB$: A base relation;
      $\bfA$: A dictionary of `measure $\rightarrow$ aggregation function'
    \REQUIRE $\gamma$: A region; $F = (D, M)$: Dimension and measure features
    \ENSURE A view (relational table) $V$ \\
    \STATE Filter the base relation $\bfB$ using $\gamma$, and get
      $X = \bfB[\gamma]$.
    \STATE Using $X$, compute view $V$
      by GroupBy dimensions in $D$, and compute measure
      $m \in M$ using aggregation $\bfA[m]$.
    \RETURN $V$
  \end{algorithmic}
  \label{alg:basetable-aggr}
\end{algorithm}

\textbf{CellsetCube}.
CellsetCube is a more general encoding: First, it uses a generic value *
to augment dimension values so that we can encode more tuples,
which gives rise to cellset (Definition~\ref{def:cellset}).
Then, given a region (which does not allow * value) and features,
the mapping to a relational table is done by a \emph{lookup} in a cellset,
and associate appropriate cells into a table (again, does not contain * value).
We call this mapping $\cellsetaggr$ (Algorithm~\ref{alg:cellset-aggr}).

\begin{definition}[{\bf Cellset}]
  \label{def:cellset}
  Let $d_1, \dots, d_n$ be dimensions with domains $U_1, \dots, U_n$,
  the universe is $U=\bigcup_{i=1}^n U_i$.
  Any value in $U$ is called a \emph{specific} value.
  Let $*$ be a value not in $U$, which is called a \emph{generic} value,
  A cellset $\mathcal S$ is a function of the form
  ${\cal S}: \prod_{i=1}^n \big(U_i \cup \{*\}\big) \rightarrow \mathcal{M}$,
  where a \emph{cell} in a cellset a pair $(\kappa, {\cal S}(\kappa))$.
  Since $\cal S$ is a function, sometimes we also abuse the notation
  and call $\kappa$ a \emph{cell}.
\end{definition}

Algorithm~\ref{alg:cellset-aggr} defines the mapping from region-features to tables.
The idea is that use region to filter cells, then dimensions in the
features can take any \emph{specific} value, and all other dimensions take *.
\begin{algorithm}[htb]
  \small
  \caption{\small $\cellsetaggr$: Function that maps a region-features to a table
  for CellsetCube.}
  \begin{algorithmic}[1]
    \CLASSMEMBERS $\cal S$: A cellset
    \REQUIRE $\gamma$: A region;
      $F = (D, M)$: features  of dimensions and measures.
    \ENSURE A view (relational table) $V$
    \STATE Instantiate a cellset $K={\cal S}$
      \quad{\color{green!90} \# Initially, all cells}
    \STATE $K = K[\gamma]$. That is, for any $d$ that $\gamma[d]=v$,
      we only keep cells that take value $v$ at dimension $d$
      \quad{\color{green!90} \# Filtering using the region}
    \STATE Retain a cell in $x \in K$ using the following rules:\quad{\color{green!90} $(D, M)$ features}
      \begin{enumerate}
        \item[i.] $\forall d \in D,\;x.d \neq *$
          \quad{\color{green} \# A dimension feature takes any \emph{specific} value}
        \item[ii.] $\forall d \notin D \cup \dim(\gamma),\; x.d = *$
          \quad{\color{green!90} \# Any other dimension takes *}
          \quad{\color{green!90}}
      \end{enumerate}
    \STATE Remove any dimension in the $K$ that has $*$ value,
      the result is a usual relation with no generic value $*$.
      We call this relation $V$.
    \RETURN $V$.
  \end{algorithmic}
  \label{alg:cellset-aggr}
\end{algorithm}

\textbf{RandomizedCube: A non-base table aggregatable cube}.
We note that there are cubes that are \emph{not} base-table aggregatable.
This is because one can construct a cellset with a special measure,
where there is no relationship among the cells w.r.t the measure.
One example is RandomizedCube: We start with a set of cells,
and we map each cell to a randomized value (we call this value `randomized\_metric'),
which is sampled independently for every cell.
In other words, For every cell, say (Device=*, Browser=*, Date=*),
the measure {\rm `randomized\_metric'} is a value sampled independently from a random value generator.
Clearly, this gives a valid cellset, and so we can create a CellsetCube from it. 
However, this cube is not base table aggregatable,
because the randomized value of a cell is {\em not}
determined by cells that take specific values in dimensions.
  
This example may seem artificial, but it shows that there can be data
cube objects that are information-theoretically non-base-table aggregtable.
In practice, HoCA such as cube crawling may generate results
that is naturally a cube, but the cost of generating the cube is
so high that they should be treated similar to the random cube:
We should simply record these results in some cube data object for further analysis
(An example of this is the intermediate cubes in the recurrent cube crawling pipeline,
see Section~\ref{sec:recurrent-crawling}). Therefore, these cubes are not instances of \btcube,
but they are instances of CellsetCube.

\subsection{HoCA operators}
\label{sec:hoca-operators}
With abstract cubes, we can then design operators to analyze them.
We present two basic HoCA operators: Cube crawling and cube join.

\textbf{Cube crawling}.
Cube crawling refers to the following procedure in general:
Given a \emph{space of regions} (region space),
and a set of \emph{Region Analysis Models}
(\textbf{RAMs}, or simply models),
we want to traverse every region in a region space,
and for each region we extract features needed by the models, 
and evaluate models over these features to get a signal vector.
Finally, we output a cube mapping regions to signal vectors.
Algorithm~\ref{alg:theoretical-crawling} gives the conceptual algorithm.
We note that:

\begin{algorithm}[htb]
  \small
  \caption{\small $\NaiveCubeCrawl$: A na\"{i}ve algorithm for cube crawling, for illustrating the concept.
    Note that models are programmable, and are fed as input to crawling. This algorithm is impractical
    since the region space may be exponentially large.}
  \begin{algorithmic}[1]
    \REQUIRE $\bfC$: $\sf \color{blue} AbstractCube$;
      {\bf region\_space}: A specification of region space;
      {\bf models}: $\sf \color{blue} List[Model]$
    \ENSURE  $\bfR: {\sf \color{blue} AbstractCube}$
      A cube mapping regions to signals. For simplicity,
      here the output cube is encoded as a set of
      region-signals pairs.
    \STATE {\color{green!90} \# Set population features for each model}
    \FOR{$f$ \textbf{in} models}
      \STATE $f$.population\_df = \bfC($f$.attribute\_features, $f$.metric\_features)
    \ENDFOR
    \STATE $\bfR = \emptyset$.
    \STATE {\color{green!90} \# Evaluate models on every region and collect results.}
    \FOR{region \textbf{in} region\_space}
      \STATE subcube = $\bfC$[region]
      \STATE $r = \emptyset$
      \FOR{$f$ \textbf{in} models}
        \STATE data\_frame = subcube($f$.attribute\_features, $f$.metric\_features)
        \STATE $r$.append($f$(data\_frame))
      \ENDFOR
      \IF{$r$ passes all filtering}
        \STATE $\bfR$.append((region, $r$))
      \ENDIF
    \ENDFOR
    \RETURN $\bfR$.
  \end{algorithmic}
  \label{alg:theoretical-crawling}
\end{algorithm}

\textbf{\em Region space}. Region space specification specifies the space of regions for crawling.
  The simplest example of region space is the \emph{cube space}, specified as `cube\_space(dimension\_list)'
  such as \\`cube\_space([Device, Browser])', which considers \emph{all} possible combinations of the two dimensions,
  and generate regions such as (Device=`Pixel 5a'), (Device=`Pixel 5a', Browser=`Chrome'), \\(Browser=`Chrome').
  Note that there are various possibilities to shrink the region space by leveraging the data topology.
  For example we can consider data topology such as product of hierarchies (\cite{RSY18, FGKNST05}),
  and functional dependencies (\cite{AKSGXSASM21}), which can be easily encoded as region space specification.

\textbf{\em Model features}.
Each model $f$ has a set of dimension and measure features,
encoded as attribute\_features and measure\_features.

\textbf{\em Population feature data for a model}.
  Lines 2--4  sets the population data for each model,
  by extracting features from $\bfC$.

\textbf{\em Region feature data for a model}.
  At lines 10--12, we zoom into a region, which gives a subcube,
  and we extract model features from the subcube,
  which gives region features. Therefore, each model $f$ sees two pieces of data, region features and population features.

\textbf{\em Model programmability}.
  Cube crawling supports \emph{programming new region analysis models}
  and use them for crawling. This feature allows us to apply different models and
  capture a wide variety of different use cases.
  In Section~\ref{sec:power}, we describe in details how model programming is designed,
  and give detailed examples.

\textbf{\em Crawling output}.
The output of crawling is a cube with a particular structure:
It maps a region (no feature) to a signal vector
(signals are encoded as a table with one row,
and each column encodes one signal name).
One can naturally extend this to mapping regions to relational tables in general.

\begin{example}[\textbf{Crawling for timeseries outliers}]
  \label{example:crawling-ts-outliers}
  This is a first example of cube crawling. We first instantiate a model:

  {\rm
  \begin{lstlisting}[language=Python]
  ts_model = TimeSeriesOutlierDetector(
    datetime='Date', metric='Revenue',
    timepoint='2022-01-31',
    signals=['importance_score'])\end{lstlisting}
  }
  This model then instantiates where attribute\_features
  is [Date], and metric\_features is [Revenue].
  Finally, given a timeseries, we want to check whether the metric
  value at 2022-01-31 is an outlier, this checking is captured by
  computing a signal called `importance\_score', which is offered by
  the model programmer. Then we can crawl with this model on the cube \bfC:

  {\rm
  \begin{lstlisting}[language=Python]
  CubeCrawl(
    C,
    region_space=span(Device, Browser), 
    models=[ts_model])\end{lstlisting}
  }
  Basically, this crawling considers regions spanned by Device and Browser
  (e.g., a single device where browsers are aggregated, a single browser where
  devices are aggregated, or a combination of a specific device and a specific
  browser). For every region, we extract a timeseries of Date and
  Revenue, then the model \textbf{ts\_model} analyzes this timeseries,
  and outputs a signal `importance\_score'.
\end{example}

\begin{figure}[htb]
  \centering
  \includegraphics[width=0.75\textwidth]{
    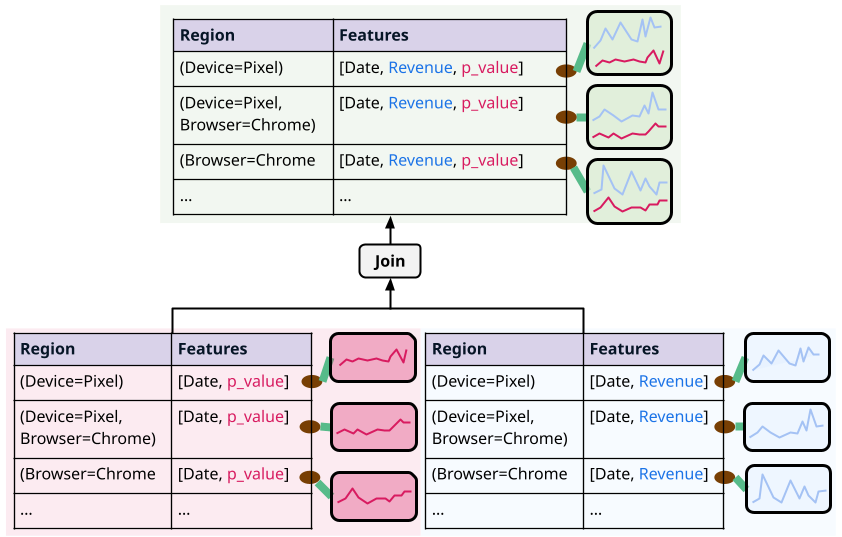}
  \caption{\small
    \textbf{Cube join}: The left (red) and right (blue) 
    are two cubes (RegionFeatures$\rightarrow$RelationalTable functions).
    The cube join operator allows one to meld these two functions into one,
    by matching region, and then join on dimension features.
    The joined cube (green) thus contains both p\_value and Revenue metric features.
  }
  \label{fig:cube-join}
\end{figure}

\textbf{Cube join}.
Cube join operator joins two cubes into one. Cube join is critical for \emph{composition},
since it allows one to join information from different cubes for deeper analysis.
For simplicity of presentation,
we only consider the case where two cubes to join have the same dimensions.
There are two ways to define cube join: A local definition and
a global definition (if we choose to work with Cellset cubes as the abstract cubes).
The local definition is based on the observation that the core of a cube
is to implement mapping given region and features.
This gives the following definition:

\begin{definition}[\textbf{Local definition of cube join}]
  Let $\bfC_l$ and $\bfC_r$ be two cubes.
  We define $\bfC = \bfC_l\ \sf{JOIN}\ \bfC_r$, the \emph{joined cube},
  where the $\sf GetView$ is implemented as follows.
  Let $\gamma$ be a region, and $\mathcal{F}$ be features, then
  \begin{enumerate}
    \item We extract $\text{df}_l = \bfC_l[\gamma](\mathcal{F})$
    \item We extract $\text{df}_r = \bfC_r[\gamma](\mathcal{F})$
    \item We compute $\text{df} = \text{df}_l\ {\sf JOIN}\ \text{df}_r$,
    where the join keys are the dimension feature columns.
  \end{enumerate}
  We can use different join semantics for the $\sf JOIN$ operator
  ($\sf LEFT\_JOIN$, $\sf RIGHT\_JOIN$, $\sf INNER\_JOIN$).
  Figure~\ref{fig:cube-join} visualizes this process.
\end{definition}

On the other hand, if we are working with CellsetCube,
we can define cube join at the cellset level,
and then use the abstract aggregation (Algorithm~\ref{alg:cellset-aggr})
to induce an cube instance. This gives the following:

\begin{definition}[\textbf{Global join for cellsets}]
  Let $\bfC_l$ and $\bfC_r$ be two cubes. Suppose that both cubes have been
  converted into instances of CellsetDataCube. Then we define
  $\bfC = \bfC_l\ \sf{JOIN}\ \bfC_r$ the \emph{joined cube}, as follows:
  (1) We join the underlying cellsets of $\bfC_l$ and $\bfC_r$,
    using the specified join operator
    ($\sf LEFT\_JOIN$, $\sf RIGHT\_JOIN$, $\sf INNER\_JOIN$),
    where the join keys are all the dimensions.
    Let the resulting cellset be ${\cal S}_{\text{joined}}$.
  (2) Return $\bfC = \text{\rm CellsetDataCube}({\cal S}_{\text{joined}})$.
\end{definition}

\begin{example}[\textbf{Augmenting a cube with historical {\bf p\_value}}]
\label{example:augmenting-cube}
We give more explanation to Figure~\ref{fig:cube-join}:
Suppose that we run outlier detection crawling daily,
and the outlier detector computes a {\bf p\_value} for each region.
These computed p-values (for each day) is stored in a cube $\bfC'$
(i.e., red cube on the left in Figure~\ref{fig:cube-join}),
which has dimensions {\bf Device}, {\bf Browser}, {\bf Date}, and measure {\bf p\_value}.
Now, on a new date $t^*$, and we have a cube (blue cube on the right in Figure~\ref{fig:cube-join})
$\bfC$ with dimensions {\bf Device}, {\bf Browser}, {\bf Date}, and measure {\bf Revenue},
then we can ``inner join'', $\bfC \text{ INNER\_JOIN } \bfC'$,
using join key {\bf Device}, {\bf Browser}, and {\bf Date}.
This gives a larger cube now with two measures {\bf Revenue} and {\bf p\_value}.
In Section~\ref{sec:recurrent-crawling}, we discuss recurrent cube crawling,
where a revised timeseries outlier detector can leverage the historical
`p\_value' in the joined data cube.
\end{example}

This construction of augmenting a cube with previous analysis results
is useful in continual monitoring, and in particular in
Recurrent Cube Crawling, described in detail in Section~\ref{sec:recurrent-crawling}.

\section{Programming insights with HoCA}
\label{sec:power}
This section gives various examples of programming with HoCA to uncover data insights.
The power of the current HoCA framework comes from several aspects:
{\bf (1)} As we mentioned before, one can program region analysis models (RAMs)
to collect relevant analysis on a set of data features into a module and reuse.
Crucially, RAM has a notion of \emph{population features},
which allows one to go beyond only analyzing local features at a region,
and program \emph{region-population} analysis that compares region and population features,
achieving a form of holistic analysis.
RAM abstraction allows easy integration and novel applications of advanced analytic techniques,
such as timeseries outlier detection (\cite{PyBSTS}), causal inference (\cite{BGKRS15}),
differentiable programming (\cite{jax_2018_github}),
which goes much beyond previous works (e.g.~\cite{AKSGXSASM21}).
{\bf (2)} An important feature of our crawler API design is that it allows crawling with
\emph{multiple models}, potentially requesting different feature sets, during one crawling.
This significantly broadens the scope of analytic questions we can answer with one crawl.
{\bf (3)} Since input and output are both cubes,
HoCA operators can be composed  to construct complex HoCA programs for advanced use cases.

\subsection{Programming RAMs}

RAM abstraction allows easy integration and novel applications of advanced analytic techniques,
such as timeseries outlier detection (\cite{PyBSTS}), causal inference (\cite{BGKRS15}),
differentiable programming (\cite{jax_2018_github}). We give a few examples.

\subsubsection{Metric attribution via Aumann-Shapley}
\label{sec:metric-attribution}

Let $\bfC$ be a cube and $M$ be a metric.
For a region $\gamma$, we use $M^\gamma$ or $M[\gamma]$ to denote the
value of metric at the region. Specifically, for empty region $\gamma=[]$,
we use $M^p$ to denote the population metric value.
Suppose now that we divide the data into two segments (test segment,
and control segment), where on the test segment we have metric value $M^p_t$,
and on the control segment we have metric value $M^p_c$, so there is a
\emph{population metric value change} $\Delta^p = M^p_t - M^p_c$.
In the \emph{metric (change) attribution} problem, we want to attribute
$\Delta^p$ to regions in the cube: That is we want to assign a score,
denoted as $\ras(\gamma)$, and naturally we want this scoring to have the
following natural properties: \textbf{(1) Completeness}: That is for the
empty region the attribution score is exactly $\Delta^p$, $\ras([]) = \Delta^p$.
\textbf{(2) Additivity:} For any two disjoint regions $\beta, \gamma$,
$\ras(\beta \cup \gamma) = \ras(\beta) + \ras(\gamma)$. So for example,
we have many different devices and their attributions should add up to the
population level change. \textbf{(3)} $\ras(\gamma)$ should reflect the \emph{importance}
of contribution to $\Delta^p$. Note that (3) is informal, and it is reflected in our algorithms.
We found that a novel application of the Aumann-Shapley method (\cite{SS11,AS2015})
gives principled answers to such questions. Our approach uses the following definition

\begin{definition}[\textbf{Region-Ambient Metric Model}]
A region-ambient metric model combines metric values (maybe more than that of $M$) of a region $\gamma$,
and metric values of the ambient of the region $\gamma^c$
(i.e., data not in $\gamma$), to recover the population metric.
\end{definition}

If $M$ is summable (e.g., Revenue), a region-ambient metric model is simple:
$F(w, w') = w + w'$ ($w$ models the metric value in a region, and $w'$ is the metric value
in the ambient of the region). If $M$ is non-summable, we may need more metrics than just $M$:
For example, for a density metric $M = W/S$ where both $W$ and $S$ are summable, we need
values of both $W$ and $S$ in order to recover the population metric value,
and a natural region-ambient metric model has four parameters:
$F(w, w', s, s') = (w+w')/(s+s')$.

With a region-ambient metric model, one can then apply the Aumann-Shapley method to the model,
with test/control metric values as two end points.
This can be easily illustrated on a summable metric ($F(w,w') = w+w'$):
In the control segment we have point $P_0 = (M^\gamma_c, M^p_c - M^\gamma_c)$,
and in the test segment we have point $P_1 = (M^\gamma_t, M^p_t - M^\gamma_t)$.
We consider a straight line $h(t) = (1-t)P_0 + tP_1, t \in [0,1]$.
Then for $G(t) = F(h(t))$, we have:
\begin{align}
\begin{split}
   \Delta^p = G(1) - G(0)
   = &\int_0^1 G'(t)dt \\
   = &\int_0^1 \left(\frac{\partial F}{\partial w}\frac{\partial w}{\partial t} + \frac{\partial F}{\partial w'}\frac{\partial w'}{\partial t} \right) dt \\
   = & \left(\int_0^1 \frac{\partial F}{\partial w}\frac{\partial w}{\partial t} dt\right)
      + \left(\int_0^1 \frac{\partial F}{\partial w'}\frac{\partial w'}{\partial t} dt\right)
\end{split}
\end{align}
where the last equality switch sum-integration order.
We then define $\ras(\gamma)$ as $\int_0^1 \frac{\partial F}{\partial w}\frac{\partial w}{\partial t} dt$,
which is easily evaluated to $\ras(\gamma) = M^\gamma_t - M^\gamma_c$,
which is just our intuitive notion of metric shift in the region!

For non-summable metrics, this method works the same way,
but produces formulas that are far less intuitive.
For example, for the density metric, if we use the metric model
$F(w, w', s, s') = (w+w')/(s+s')$, the attribution formula for a region,
which is the result of doing path integration, becomes:
\begin{align}
  \label{eq:density-region-attribution}
  \begin{split}
    &\ras(r) \\
    = & \frac{(w_t^r-w_c^r)(s_t^p-s_c^p) -
          (w_t^p-w_c^p)(s_t^r-s_c^r)}{(s_t^p-s_c^p)^2}
        \Big(\ln(s_t^p) - \ln(s_c^p)\Big)\\
      & + \frac{s_t^r-s_c^r}{s_t^p-s_c^p}
        \Big(\frac{w_t^p}{s_t^p} - \frac{w_c^p}{s_c^p}\Big)
  \end{split}
\end{align}
which is much less intuitive, unless one approaches from first principles.
Due to space, the derivation and the proofs that they satisfy
completeness and additivity are deferred to Section~\ref{sec:density-model-derivation}.
For more general metric models, where closed forms of integration are
more difficult to derive, one can use differentiable programming framework,
such as JAX (\cite{jax_2018_github})), to compute the integration numerically.

\subsubsection{Outlier detection and self supervision}
\label{sec:outlier-detection-and-ssl}
\textbf{Adapting advanced timeseries outlier detector for region analysis}.
because model programmers can directly work with features without caring
how they are generated, we can adapt various timeseries outlier analysis
(e.g.~\cite{BGKRS15} and its variants) almost verbatim as region analysis models.
The availability of population features also enables us to improve these analyses:
These timeseries analyses primarily focus on the \emph{shape} of one timeseries.
However, the cube population data allows us to measure the \emph{normalized size}
of a timeseries within the population. This allows us to compute signals that
incorporate both the normalized size and outlier score.
Due to the normalization, these hybrid signals are easier to filter
than unnormalized versions, and they also provide better signals for larger
slices that are experiencing anomalies, due to the hybrid signal design.

\textbf{Calibrating p-value using historical p-values}.\label{para:self-supervised-outlier-detector}
In Example~\ref{example:augmenting-cube} we described joining a cube
with historical `importance\_score'. With the joined cube,
we can then leverage the these historical importance scores for outlier detection,
which provide important information for improving precision/recall of the detection for the current time point.
Specifically, we revised the CausalImpactModel (\cite{BGKRS15}) for outlier
detection to make use of historical p-value:
Causal impact detection makes a statistical assumption that that p-value
follows $\text{Unif}(0,1)$ distribution under null hypothesis.
However, this assumption does not really hold on practical data.
Instead of working under this assumption, we then use the historical p-values
(enabled by cube join) to calibrate and get the final p-value.
Note that this can be viewed as a form of self-supervision, because we are using
the p-values learned to supervise model behavior.

\subsubsection{Emulating previous constructions}
\label{sec:emulation}
HoCA can also be used to emulate various classic constructions.
As a simple example, Section~\ref{sec:fim-model-code} in~\cite{wu2023holistic}
describes a region analysis model, FrequentItemsetModel,
to use in cube crawling to solve FIM.

\textbf{Emulating DIFF (Abuzaid et al. (2021))}.
We note that data explanation engines described in~\cite{AKSGXSASM21}
can be implemented as a single cube crawling with proper region analysis models.
Specifically, the authors in~\cite{AKSGXSASM21} defined $\sf DIFF$ operator,
which receives two data frames, and differentiate these two data,
based on some simple statistics, to find interesting regions.
One can encode this as a cube crawling by using a column `is\_test',
where $\sf True$/$\sf False$ identifies the two tables to differentiate.
Moreover, computing the differentiation statistics can be wrapped into a region analysis model,
which loads `is\_test' column as a feature.
This way, we can \emph{emulate all use cases supported in } $\sf DIFF$.
We note that the DIFF paper described support for UDFs for computing differentiation statistics.
A key difference here is that the DIFF approach does not have the RAM abstraction,
and so there is no separation between features and analysis of features.
This causes difficulties in integrating with advanced statistical analyses,
and applying multiple analyses during one crawl,
since different analyses may need features aggregated at different granularity.

\subsection{Multi-model crawling}
\label{sec:multimodel}

Another important feature of our crawler interface design is that allows one to combine multiple models,
potentially requesting different feature sets, during one crawl.
Multi-model crawling significantly expands the scope of analytic questions
that can be answered using one crawl.
For example, in our application of
finding important timeseries outliers,
it is usually useful to combine two models, EntityWeightModel and TimeSeriesOutleirModel,
where the EntityWeightModel checks the weight of the timeseries
(for example this could be the total revenue in the time period),
and only if the region timeseries passes some thresholds
on the weight we invoke the timeseries outlier detection.
In this case, since EntityWeightModel is much cheaper than outlier detector,
it can save a lot of computation by skipping ``tiny'' slices.
Note that in this example,
EntityWeightModel only requests feature [`TotalRevenue'],
while the TimeSeriesOutlierModel requests features [`Date', `TotalRevenue'].
Without multi-model crawling, one would need to first do a crawl using
EntityWeightModel to analyze TotalRevenue, then using the results to filter
timeseries and then analyze timeseries for outlier, which results in much
more complex analysis programs.

We note that this design is also flexible in two ways:
{\bf (1)} By having EntityWeightModel requests [`Date', `TotalRevenue'] as features,
we can immediately have more fine-grained filtering (e.g., on maximum daily total revenue),
and {\bf (2)} EntityWeightModel can also be combined with other models
for different analysis.

\subsection{Compositions}
\label{sec:compositions}
Our discussion so far only covers, basically, what a \emph{single}  cube crawling can do.
This section gives two examples of complex analysis pipelines formed by
composition of operators:
\textbf{(1)} An analysis pipeline that is a \emph{sequential composition} of
two cube crawling operators.
\textbf{(2)} A \emph{recurrent} analysis pipeline which
leverages all the constructions we discussed so far: At each timepoint a cube crawling is invoked,
it leverages \emph{previous crawling results} as features.
More examples can be found in Section~\ref{sec:more-examples} in~\cite{wu2023holistic}.

\subsubsection{Composition of two crawlings}
\label{sec:composing-two-crawlings}

In Example~\ref{example:crawling-ts-outliers},
we used a cube crawl to find regions that have important outliers.
Now, given the result cube, we may ask a \emph{further question}:
\emph{``What regions are \emph{frequently} involved in regions that have important outliers in this cube?''}
For example, we may find that `(Device=Chrome)', which is itself a region,
appears in many regions that have important outliers
(e.g., (Browser=Chrome), or (Device=Pixel 6a, Browser=Chrome), or (Device=Desktop, Browser=Chrome)).
Such an analysis may unveil that Chrome may need more attention,
because its combination with many devices have resulted in anomalous behavior.
We note that this question can be answered with a second crawl,
working on the result cube from the first crawl:
\textbf{(1)} Recall the output of the first crawl is a cube which only keeps
regions that have interesting outliers. We augment the output cube with a measure column
which measure the ``weight'' of a region in terms of outlier
(the simplest measure is to put a constant 1 for every region).
The augmented cube is the input of next crawl.
\textbf{(2)} We can then apply techniques such as Frequent Itemset Mining
(which can be implemented using one crawl to uncover patterns that frequently
(w.r.t. the weight measure) show up in regions that have important outliers.

\subsubsection{Recurrent cube crawling}
\label{sec:recurrent-crawling}

In the above, we discussed using cube join to inject previous crawling results
as features for later detection.
Figure~\ref{fig:recurrent-cube-crawling} demonstrates a recurrent
cube crawling pipeline applying this analysis in continual monitoring:
Suppose that at time $t^*$ we want to do a crawling to find interesting outliers.
Results of previous crawlings are recorded into a cube $\bfR_{t}$ (red)
(in particular, $\bfR_t$ \emph{appends} results at $t$),
so $\bfR_{t^*-1}$ contains all historical crawling results for $t \le t^*-1$.
Now at $t^*$, we have a data cube $\bfC_{t^*}$ of newly collected data
(e.g., Revenue) for detection, we join these two cubes to get
$\bfD_{t^*} = \text{Join}(\bfR_{t^*-1}, \bfC_{t^*})$ (green),
which is then fed to a cube crawling for analysis.
In our applications, we found such constructions to be useful for outlier detection quality.
We give two simple examples:
\textbf{(1) Self-supervision in the continual monitoring}.
We described above~\ref{para:self-supervised-outlier-detector} of a region analysis
models that leverages historical p-value for self supervision.
Recurrent crawling is a natural fit for applying such models,
where the cube join prepares historical p-values in the cube to crawl.
\textbf{(2) Removing transient anomalies}.
Advanced outlier detectors all learn the shape of timeseries
(e.g., Bayesian Structural Time Series) from existing data points.
However, sometimes timeseries may contain transient anomalies that only appears
short period of time and disappeared (e.g., several minutes).
Leaving these anomalies in the timeseries will bias the learning towards
that these values are ``normal'' (i.e., there is a new trend), and can cause noise.
With recurrent cube crawling, we can simply detect transient anomalies and record
them in the red cube, and upon a new detection, we use these labels to locate
transient anomalies and replace them with normal values. This allows us to
systematically correct transient anomalies.

\begin{figure}[htb]
  \centering
  \includegraphics[width=0.75\textwidth]{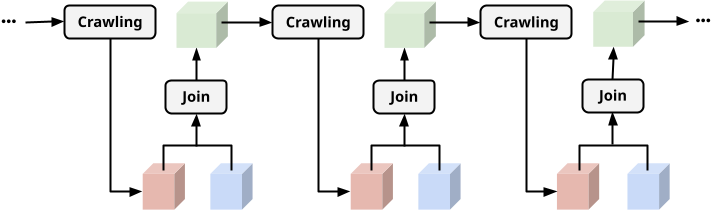}
  \caption{\small
  Recurrent cube crawling: Previous crawling results are joined with
  new data (e.g., revenue cube of today) for analysis.
  For outlier detection, it is very useful to use the p-values of
  previous detection results to calibrate the current detection.}
  \label{fig:recurrent-cube-crawling}
\end{figure}

\section{Algorithmic space for HoCA}
\label{sec:considerations}
The last two sections covered the HoCA framework, and programming
data insights with HoCA abstractions.
This section turns to the algorithmic space of HoCA:
Section~\ref{sec:considerations:cube-crawling} considers algorithms for cube crawling,
Section~\ref{sec:considerations:join} on algorithms for cube join,
and Section~\ref{sec:considerations:materialization} on physical design of cubes.

\subsection{Considerations for cube crawling}
\label{sec:considerations:cube-crawling}
We consider several algorithmic aspects for crawling:
\textbf{(1) Early stopping during region exploration}.
We note that the na\"{i}ve cube crawling algorithm,
Algorithm~\ref{alg:theoretical-crawling},
is impractical as the region space may grow exponentially in the number of dimensions to crawl.
For this we observe the natural lattice structure on the region space,
and formulate a generalized apriori property of model signals (Definition~\ref{def:apriori}).
This gives a natural top-down region exploration that allows early stopping (Algorithm~\ref{alg:topdown-crawling}).
\textbf{(2) Model pushdown.}
Another important algorithmic opportunity is that
we can push some model computation into the cube processing,
and filter regions without sending the extracted cube features to the model.
This can give significant performance benefits in a distributed implementation,
where cube processing and model evaluation are on different servers.
\textbf{(3) Top-N filtering}. A common filtering strategy is to compute the
top $N$ regions w.r.t. some signals. A na\"{i}ve strategy is to check all
regions to find the top $N$. To this end, we discuss how can we leverage
ideas in~\cite{FLN02} to enable optimizations (1) and (2) during Top-N computation,
so that one doesn't have to check all regions.

\begin{algorithm}[htb]
  \small
  \caption{\small$\TopDownCubeCrawl$: A more practical top-down algorithm paradigm
    for cube crawling. This paradigm allows early stopping, by having the crawler interact with signals extracted by models from a region.}
  \begin{algorithmic}[1]
    \REQUIRE $\bfC$: $\sf \color{blue} AbstractCube$;
      {\bf region\_space}: A specification of region space;
      {\bf models}: $\sf \color{blue} List[Model]$
    \ENSURE $\bfR$: $\sf \color{blue}  AbstractCube$.
      A cube mapping regions to signals. For simplicity,
      here the output cube is encoded as a set of
      region-signals pairs.
    \STATE $\bfR = \emptyset$
    \STATE Initialize pending\_regions as only contain the empty region
    \WHILE{pending\_regions is not empty}
      \STATE region = pending\_regions.pop()
      \STATE subcube = $\bfC$[region]
      \STATE {\color{green!90} \# Evaluate models on a region.}
      \STATE $r$ = $\emptyset$
      \FOR{$f$ \textbf{in} models}
        \STATE {\color{green!90} \# Extract features from the region subcube}
        \STATE data\_frame = subcube($f$.attribute\_features, $f$.metric\_features)
        \STATE {\color{green!90} \# Apply the model and append results}
        \STATE $r$.append($f$(data\_frame))
      \ENDFOR
      \STATE {\color{green!90} \# Early stopping!}
      \IF{$r$ fails apriori pruning}
        \STATE continue
      \ENDIF
      \STATE {\color{green!90} \# Append more regions to crawl if we don’t stop early}
      \STATE pending\_regions.append(region.children)
      \STATE {\color{green!90} \# Collect $r$ into $\bfR$}
      \IF{signals $r$ passes all thresholds}
        \STATE $\bfR$.append((region, $r$))
      \ENDIF
    \ENDWHILE
    \RETURN $\bfR$.
  \end{algorithmic}
  \label{alg:topdown-crawling}
\end{algorithm}

\textbf{Top-down crawling with early stopping}.
To begin with, we note that a region space may grow exponentially as the number of dimensions increases.
For example, if we have $n$ dimensions to crawl, there are $2^n$ subsets of dimensions
(each subset is called \emph{grouping set}), and each grouping set can give a large number of regions.
Therefore, it is important to devise strategies to prune the search space w.r.t. the specified thresholds. 
For this, we note a natural {\em partial order} among {\em regions}:
For two regions $\gamma_1$ and $\gamma_2$,
$\gamma_1 \prec \gamma_2$, if every dimension value
in $\gamma_2$ is also in $\gamma_1$. In other words,
$\gamma_1$ is more ``fine-grained'',
while $\gamma_2$ is more ``coarse-grained''.
(e.g., (State=CA, City=MTV) $\prec$ (State=CA))
Definition~\ref{def:apriori} defines Apriori property of model signals:
\begin{definition}[\textbf{Apriori property of model signals}]
  \label{def:apriori}
  Let $\sigma$ be a signal computed by a model (e.g., $\sf support$
  in the FrequentItemsetModel). The signal is said to satisfy the apriori property
  if for any two regions $\gamma_1 \prec \gamma_2$,
  $\sigma[\gamma_1] \prec \sigma[\gamma_2]$, where $\sigma[\gamma]$ means the
  value of signal $\sigma$ on region $\gamma$.
\end{definition}
There are many examples of apriori model signals. For example, if a signal is nonnegative and admits $\sf SUM$ aggregation,
then it satisfies the Apriori property. Another example is $\sf COUNT\_DISTINCT$, which also satisfies the Apriori property,
because more fine-grained regions can only have less distinct counts.
With this definition, if we crawl with models that computes signals that satisfy the above Apriori property,
then we can early stop if a coarse-grained region already fails some thresholding.
This thus gives rise to top-down crawling as sketched in Algorithm~\ref{alg:topdown-crawling}.
We note the following:

\textbf{\em Region tree}. At line 19, one needs to define `region.children',
  which gives a tree of regions for exploration. We note that finding a good tree
  will be important for performance. For example, one may start by considering
  dimensions that have low cardinality, and then higher-cardinality dimensions,
  so we go from coarser regions to more fine-grained regions. Another factor
  that may have a significant impact on region tree is \emph{data topology}:
  For example, if there is a hierarchy of dimensions
  (such as Country > State > City),
  then one only needs to crawl degree-1 regions
  (degree of a region is defined in preliminaries) in Country,
  degree-2 regions in [Country, State], and degree-3 regions in [Country, State, City].

\textbf{\em Exploration strategy}. At line 3, different data structures
  can be used for the `pending\_regions' variable,
  and thus give different exploration order of regions.
  For example, instantiating it as a stack gives depth first exploration,
  and as a queue gives breadth first exploration.
  Note the ``embarrassing parallelism'' for regions in `pending\_regions',
  and one can explore regions in it in parallel.

\textbf{Model pushdown to cube processing}.
In a distributed implementation, one may implement the cube function using using a distributed query engine.
In that case, the evaluation of models over the extracted views
and the cube processing may happen on different servers.
To reduce data transfer, we can push some model computation to the cube-data processing
if they are easily expressible in SQL,
so the signal computation is completed already during cube function computation.
This pushdown becomes particularly useful if there is a threshold filtering on the signal,
so we may directly filter a region without transferring data to model evaluation servers.

\textbf{Top-N cube crawling}.
While thresholding on the model signals is a most intuitive method, sometimes it
can get difficult in determining what the thresholds should be. In such cases,
a next intuitive filtering method is to compute the Top-N results with respect to
some model signals. An interesting algorithmic consideration here is how to avoid
\emph{examining every region} in this computation. In view of 
Algorithm~\ref{alg:topdown-crawling}, this requires us to leverage apriori pruning
on model signals, which in turn needs a threshold on the signals, which do not
exist for the Top-N computation.
To this end, suppose we are working with a single model signal $\nu$ which satisfies
the apriori property. We adapt the basic idea of ``dynamic thresholding'' from \cite{FLN02}:
\emph{If our goal is to find top $n$ regions,
then if we have seen any set $\cal F$ of regions of size larger than $n$,
then the $(n+1)$-th region in $\cal F$ can be used as threshold on $\nu$}.
While this is simple to implement in the central case using a priority queue,
it becomes challenging in a distributed implementation:
First, maintaining a centralized priority queue may incur much coordination.
Second, there is a now a tradeoff between examining a large batch of regions (in parallel)
with a loose threshold, versus examining regions in smaller batches but with better thresholds.
To this end, we now adopt a simple design where we examine once some low-degree regions,
compute thresholds from them, and then use those signals for thresholding.

\subsection{Cube join: Local vs. Global}
\label{sec:considerations:join}

As we discussed in the definition of cube join, there are two ways to perform cube join:
One is to first load views locally from two cubes, and join them as tables,
and the other is to join the cellsets of two cubes, which could then induce any
view of the joined cube. These two approaches have different tradeoffs when one
performs cube analysis on a joined cube. For example, for a cube crawling on a joined
cube, the local approach gives a lazy way to join data since we only join data
if there is a need to explore a certain view. However, this means that there will
potentially be a huge amount of small joins of tables, which may incur significant
data processing overhead by having too many join queries.
On the other hand, the global approach prepares the entire cellset of the joined cube,
which may incur significant overhead in the beginning, but it gives only one join request,
and any subsequent view can be efficiently computed from the joined cellset.
The above discussion indicates that there can be a third possibility of
a hybrid approach, where we partially perform local join,
and partially perform global join.

\subsection{Physical design of cubes}
\label{sec:considerations:materialization}
While logically cube is a function from RegionFeatures space to relational tables,
its physical design (i.e., how do we encode this function)
need careful considerations in order to benefit HoCA operators' performance.
One basic consideration is that in many cases,
it will be too costly to crawl the tables if one has to aggregate slices
on-the-fly from some base table. Therefore, it is useful to provide encoding
where the cube is (partially) materialized.

\textbf{Materialization granularity}.
A naive strategy is to simply materialize the views needed in a crawling.
For example, suppose we are doing a daily continual monitoring,
where we want to analyze a timeseries of seven-days data for each day.
Then na\"{i}vely, we materialize a space of tables, each of which is a seven-day timeseries.
However, such materialization strategy is rather wasteful:
For example for any two consecutive days,
the materialization has an overlap of six days of aggregated data.
A much better strategy in this case is to organize the materialized data by date:
For each day, we materialize the daily data of different regions into a chunk and store them.

\textbf{Re-chunking}.
Suppose that we have an encoding of cube organized data by date, as discussed above.
When we actually do the cube crawling, we need to stitch together 7 days of data.
For performance, we then would like to \emph{re-chunk} the cube so that for the
same region, the 7-day data are stitched together into a slice.
Note that the re-chunked cube and the cube above are two different encoding of the
same logical cube (i.e., RegionFeatures space to tables function).

\section{Implementation and evaluation}
\label{sec:impl-eval}
We have implemented the current HoCA framework using a distributed SQL engine~\cite{F1Query},
as well as Apache Beam~\cite{PythonBeam}.
We give some important lessons of our implementation of cube crawling.

$\bullet$ For our pure SQL engine implementation,
we leveraged TableValuedFunction (TVF) framework to execute user programmed RAMs.
At the high level, a coordinator receives a crawling request,
and factors the crawling request into SQL-TVF queries
(the exact factoring is complicated, which depends on the exploration strategy):
The SQL part is to prepare region-features tables,
and the TVF part is for evaluating RAMs on the materialized tables.
We term this implementation of crawler the \textbf{SqlTvf crawler}.

$\bullet$ One lesson we learned is that the current TVF framework implemented in our query engine
does not scale well as the cube grows: A key reason is that the current TVF scheduling is not designed
with massive slice model evaluation (from crawling) in mind. We are currently exploring a better
TVF scheduling design that would overcome this limitation.

$\bullet$ To mitigate the above scalability issue, we turn to use Apache Beam for model evaluation.
Basically, we only rely on SQL to generate tables induced by region-features, and materialize these
tables into some storage, and then we invoke a carefully designed Beam pipeline to evaluate models
on these slices. Since Beam provides more flexible programming capabilities, we can capture
more easily optimizations such as Apriori pruning described earlier. These optimizations,
together with powerful scheduling of our internal Beam runtime (Flume C++), scales really well on
large cubes. We term this implementation of crawler the \textbf{SqlBeam crawler}.
Note that SqlBeam crawler has an additional materialization cost, which hurts latency.
Currently we use the SqlTvf crawler for streaming use cases, and use the
SqlBeam crawler for large-scale batch use cases.

\textbf{Evaluation goal}.
We evaluate our SqlTvf crawler and SqlBeam crawler.
At the high level we examine two questions:
(1) {\em Latency: What are the characteristics of latency of the crawlers?}
(2) {\em Scalability: What are the characteristics of scalability of the crawlers?}

\begin{table}[htb]
  \centering\small
  \begin{tabular}{l|l}
      \toprule
      ShiftModel &
      \thead[l]{
        A simple model that analyzes change of a summable metric \\
        (e.g., Revenue). A nontrivial part is that we normalize \\
        using the population change, because RAM allows one to \\
        access population features. This normalization is \\
        very useful for filtering.}\\\hline
     CausalImpactModel &
     \thead[l]{
       A timeseries analysis model based on~\cite{BGKRS15}. Our \\
       implementation added more analysis leveraging the population \\
       timeseries, which is useful in the holistic analysis scenario.}\\\hline
     EntityWeightModel &
     \thead[l]{A simple model for analyzing weights of multiple \\
       entities. In our experiments, this model is used to filter \\
       ``small'' timeseries.}\\
      \bottomrule
    \end{tabular}
  \caption[]{\small Summary of models used in experiments.}
  \label{tab:expt-models}
\end{table}

{\bf Models and analyses}.
Table~\ref{tab:expt-models} summarizes the models we used in experiments.
With these models, we evaluate two analyses.

\textbf{\em Shift analysis}.
For this analysis, for each dataset we chose a specific date to create a test-control split;
the control group contains all data points before the date Sep. 1, 2021 and
the test group contains all the other data points.
During the analysis, we use the early stopping with the Apriori property
(Definition~\ref{def:apriori}) of the apriori support signal.
We set a threshold of $0.075$ as an early stopping criterion.
    
\textbf{\em Time series anomaly detection}.
For this analysis, we combine two RAMs, EntityWeightModel and CausalImpactModel,
which gives an instance of multi-model crawling described in Section~\ref{sec:multimodel}.
Specifically, each region has a weight, the sum of the values in the weight column over a specified time frame.
The EntityWeightModel first filters regions so only regions with large enough weight remain,
then the CausalImpactModel performs timeseries analysis on these regions to detect outliers.
We apply this analysis daily in a continual fashion: For each day, we analyze today's observation
in a 30d window, and determine whether the observation of today is an outlier,
In the experiments, we run CausalImpactModel for the data point on a specific date, Oct. 1, 2021, compared to the data points from the 30 days in September 2021.
For EntityWeightModel filtering, we use the metric columns of the datasets as the weight columns.
Since the metric features of two datasets show different scale, we set different threshold to filter the evaluation result.
For the SqlTvf crawler experiments, the threshold for the COVID-19 Open Data is 18,000,
and the threshold for the CMS Open Payments is 5,500,000.

\textbf{Datasets}.
We evaluate the distributed crawling performance on two real-world datasets. Table~\ref{tab:dataset-summary} summarizes those two datasets.

\textbf{\em COVID-19 Open Data}.
COVID-19 Open Data\footnote{\url{https://health.google.com/covid-19/open-data/}} is Google's public data collection that contains up-to-date COVID-19-related information. COVID-19 Open Data consists of data from different sources, including demographics, economy, epidemiology, government response, weather, etc. For our evaluation of distributed crawling performance, we use one metric feature -- the number of new confirmed cases -- and 22 attribute columns, including age group, region, weather and government responses.

\textbf{\em Center for Medicare \& Medicaid Services (CMS) Open Payments}.
CMS Open Payments\footnote{\url{https://www.cms.gov/OpenPayments/Data}} is a public data set about payments and transfers of value from reporting entity, such as pharmaceutical manufacturers, to covered recipients, such as physicians. Starting from the program year 2013, CMS publishes Open Payments data annually. We merged nine annual data, from the program year 2016 to the program year 2021, into one table. We use one metric feature -- the total amount of payment in US dollars -- and 22 attribute columns, which encode information about
recipients and payments.

\begin{table}[htb]
  \centering\small
  \begin{tabular}{l|c|c|c}
      \toprule
      \textbf{Dataset} & \textbf{\# rows} & \textbf{\# attributes} & \textbf{\# regions of degree $\le$ 3} \\ \hline\hline
      COVID-19 & 21.2M & 22 & 16.8M \\\hline
      CMS & 55.8M & 22 & 607.6M \\\hline
      \bottomrule
    \end{tabular}
  \caption[]{\small Summary of datasets used for the evaluation of distributed crawling performance.}
  \label{tab:dataset-summary}
\end{table}

\textbf{End-to-end performance evaluation}.
\textbf{\em Latency}. When measuring the SqlTvf crawler latency, we used 10 TVF workers to handle each crawling request. We made 200 crawling requests for each dataset-analysis pair to compute the average performance. Figure~\ref{fig:latency} presents the measured latency -- Figure~\ref{fig:latency-covid19} shows the latency on COVID-19 Open Data, and Figure~\ref{fig:latency-cms} shows the latency on CMS Open Payments. To summarize the result, on the COVID-19 Open Data, the average latency is 192.25 seconds for shift analysis and 353.96 seconds for anomaly detection. On the CMS Open Payments, the average latencies are 218.75 seconds for shift analysis and 630.24 seconds for anomaly detection. This shows that the SqlTvf crawler generally achieves low average latency for the example use cases. 

\begin{figure}[htb]
     \centering
     \begin{subfigure}[b]{0.45\textwidth}
         \centering
         \includegraphics[width=\textwidth]{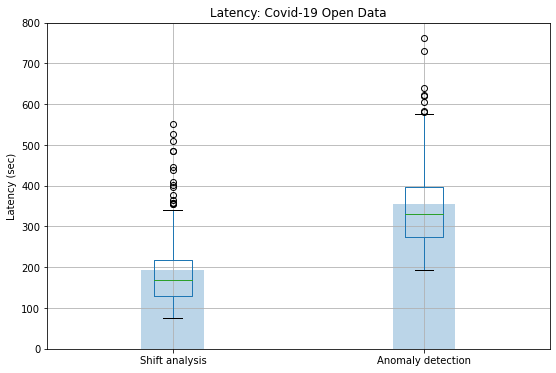}
         \caption{COVID-19 Open Data}
         \label{fig:latency-covid19}
     \end{subfigure}
     \begin{subfigure}[b]{0.45\textwidth}
         \centering
         \includegraphics[width=\textwidth]{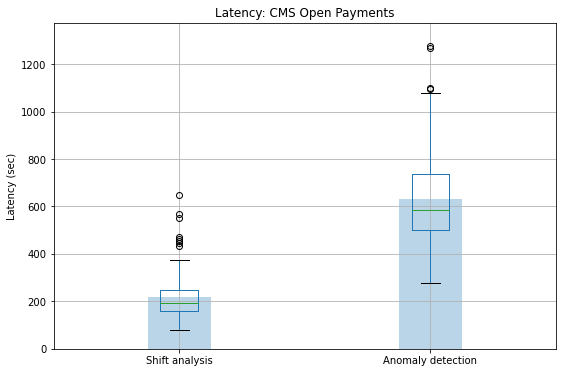}
         \caption{CMS Open Payments}
         \label{fig:latency-cms}
     \end{subfigure}
     \caption{\small Latency of the SqlTvf crawler implementation. Each column presents the average latency (bar), the median (green line), and the distribution of the latencies (box plot).}
     \label{fig:latency}
\end{figure}

\begin{figure}[htb]
     \centering
     \begin{subfigure}[b]{0.45\textwidth}
         \centering
         \includegraphics[width=\textwidth]{
         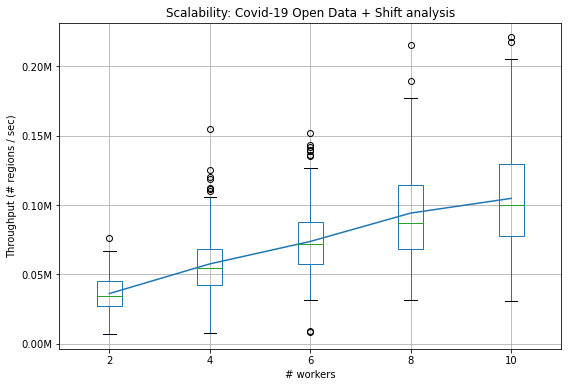}
         \caption{COVID-19 + Shift}
         \label{fig:scalability-covid19-shift}
     \end{subfigure}
     \hspace*{\fill}
     \begin{subfigure}[b]{0.45\textwidth}
         \centering
         \includegraphics[width=\textwidth]{
         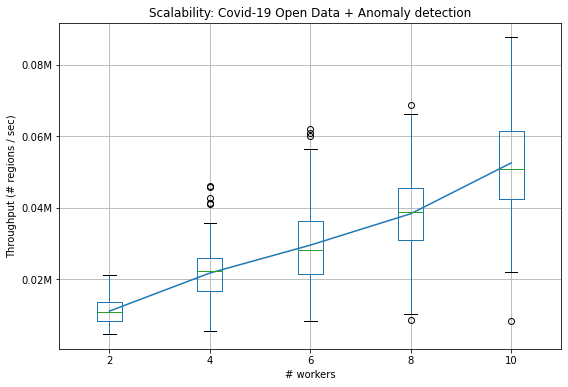}
         \caption{COVID-19 + Anomaly}
         \label{fig:scalability-covid19-anomaly}
     \end{subfigure}\\
     \begin{subfigure}[b]{0.45\textwidth}
         \centering
         \includegraphics[width=\textwidth]{
         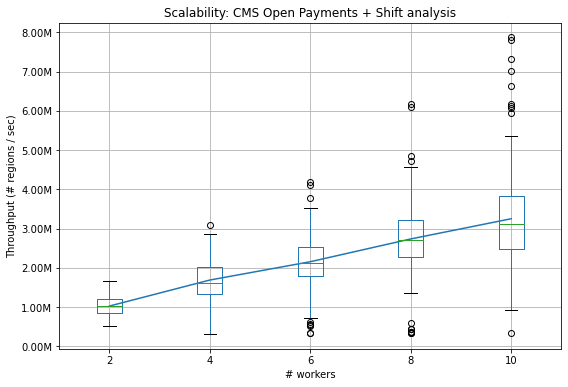}
         \caption{CMS + Shift}
         \label{fig:scalability-cms-shift}
     \end{subfigure}
     \hspace*{\fill}
     \begin{subfigure}[b]{0.45\textwidth}
         \centering
         \includegraphics[width=\textwidth]{
         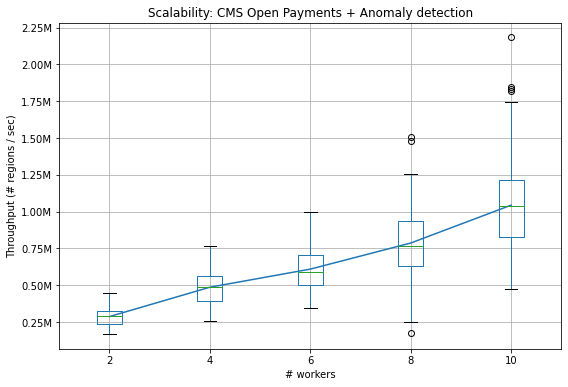}
         \caption{CMS + Anomaly}
         \label{fig:scalability-cms-anomaly}
     \end{subfigure}
     \caption{\small Scalability of the SqlTvf crawler implementation.}
     \label{fig:scalability}
\end{figure}

\textbf{\em SqlTvf crawler scalability}.
To evaluate SqlTvf crawler scalability, we first measured the latency of 200 requests on varying TVF worker sizes: 2, 4, 6, 8, and 10.
Then we computed the throughput by dividing the number of regions by each measured latency.
Figure~\ref{fig:scalability} summarizes how the throughput increases as the number of TVF workers grows.
In summary, both analyses show the increase of throughput in the range of TVF worker sizes on all the dataset-analysis pairs.
However, the demonstrated scalability of the SqlTvf crawler is limited to small workload.
For example, during the anomaly detection on COVID-19 Open Data, the TVF framework evaluates on 821 regions after the filtering by EntityWeightModel.
If we increase the number of after-filtering regions, the SqlTvf crawler suffers from the TVF scheduling problem and the latency increases significantly as a result.
To overcome this limitation, we implemented the SqlBeam crawler that a Beam pipeline evaluates RAMs.

\textbf{\em SqlBeam crawler scalability}.
For SqlBeam crawler, we vary the workload (from small to big), and measure its performance.
For 11 different filtering threshold, we measured the E2E latency (of 50 requests) for anomaly detection on COVID-19 Open Data.
Figure~\ref{fig:latency-f1beam} presents the measured E2E latency of the SqlBeam crawler.
This shows that the SqlBeam crawler can scale to large workload (e.g., 247202 regions),
and the average E2E latency only gets doubled while the workload grows more than 100 times.

\begin{figure}[htb]
  \centering
  \begin{subfigure}[b]{0.75\textwidth}
    \centering
    \includegraphics[width=\textwidth]{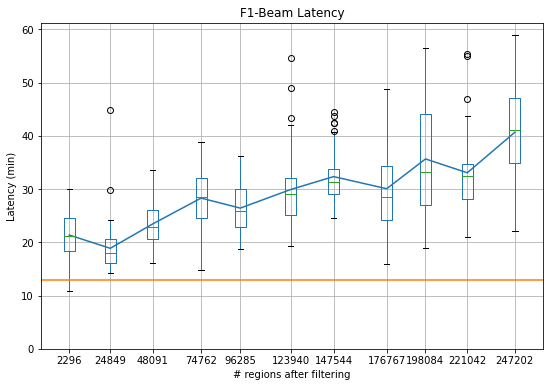}
  \end{subfigure}
  \caption{\small Latency of the SqlBeam crawler implementation. The boxplots present the latency distributions.
  The line plots show the change in average latency (blue line) and the average latency of the materialization cost
  and the filtering (orange line).}
  \label{fig:latency-f1beam}
\end{figure}

\section{Applications and discussions}
\label{sec:discussions}
\textbf{Applications of HoCA}.
Even in this early stage, our HoCA offering has attracted more than 30 teams
to build data-insights systems with it. We identify three kinds of main
application areas:
\textbf{(1) System monitoring}.
Many of our partner teams are working with extremely complex systems,
and monitoring the health of these systems is critical.
Our outlier detection construction, and its variants of slow-bleed detection
(by switching to a trend-change detection model), and recurrent cube crawling,
have found major applications for these use cases:
HoCA analysis enabled much faster and more accurate detection of precise issues
(such as which version is having anomalous behaviors), and thus prevents
significant revenue loss.
\textbf{(2) Business intelligence (BI)}.
HoCA constructions such as metric attribution and correlation detection
have been applied in several BI systems to help analysts understand
what is going on and identify new business opportunities.
For example, by composing several HoCA operators, we can quickly identify
that certain account has a revenue anomaly, and is strongly correlated with
budget allocation of some campaigns.
Some of these applications have generated significant revenue uplift.
\textbf{(3) Experimentation analysis}.
Several major internal experimentation analysis systems requires
slicing the data and do A/B testing on the slices to understand fine-grained
impact of an experiment. These core analytic functionalities can be captured by
cube crawling by programming appropriate RAMs (however, these RAMs tend to
have more statistical flavor since the input data is usually a sample).
Migrating these systems to use HoCA has significantly improved their latency
and scalability, and sometimes enabled brand new functionalities
(e.g., novel analysis leveraging HoCA abstractions).

\textbf{More HoCA operators}.
Section~\ref{sec:emulation} shows that cube crawling can emulate DIFF~\cite{AKSGXSASM21},
which in turn can emulate a variety of data explanation engines (see references therein).
To this end, if we examines literature with HoCA in mind,
there are several work that can be interpreted as HoCA operators.
For example, operators described in Multistructural Database~\cite{FaginGKNST05}
and Cascading Analyst~\cite{RSY18} can both be viewed as HoCA operators,
though they have much more adhoc semantics compared to cube crawling.

\textbf{Beyond operators}.
As we have demonstrated in recurrent cube crawling (Section~\ref{sec:recurrent-crawling}),
there can be complex HoCA pipelines with cubes (explicitly in the program) flowing in between.
These pipelines require complex cube management,
and currently have esoteric constructions. Therefore, it is of importance to research
how can we simplify and systematize these pipeline constructions
(which we call HoCA pipelines).
Another intriguing direction is whether HoCA cubes can be useful for \emph{learning}
(which we call the \emph{cube learning}).
To this end, one direction we are exploring is \emph{cube forecasting}:
Classical forecasting usually assumes that the input data is one slice,
yet in realistic scenario, input data can often be modeled as a cube
where the regions encode useful structural information, and conceivably one can
leverage this structure to achieve better timeseries forecasting.
We believe there is more since HoCA cubes provide a rich data model for
data analysis and learning.

\section{Conclusion}
\label{sec:conclusion}
In this paper we described Holistic Cube Analysis (HoCA),
an approach to augment the capabilities of relational queries for
\emph{data insights}.
At the foundational level, HoCA defines AbstractCube,
a data type defined as a \emph{function} from RegionFeatures space
to relational tables,
which allows one to go from a set of \emph{tuples} (i.e., a table)
to a set of \emph{tables} (i.e., a set of set of tuples).
As a result, queries over cubes become cube-to-cube transformations
(i.e., higher-order functions).
We described two first operators,
cube crawling and cube join, in the current HoCA framework.
The power of these two operators comes from
\emph{programmable RAMs},
\emph{multi-model crawling},
and the \emph{composition of operators}
(between cube crawling and cube join, and with other relational queries).
HoCA has found many useful data-insights applications,
across different areas such as system monitoring, experimentation analysis, and business intelligence.
There are various future avenues for HoCA,
such as efficient implementation of HoCA operators,
more useful HoCA operators, HoCA pipelines, HoCA and interactive analysis,
and last but not least, exploring connections with machine learning.

{\small
\bibliographystyle{iclr2021_conference}
\bibliography{paper}
}

\appendix
\onecolumn

\section{Region Aumann-Shapley attribution for density metric}
\label{sec:density-model-derivation}
A \emph{density metric} $\rho$ is a ratio metric 
where both the numerator and the denominator admit sum aggregation
(i.e., it is ``SUM/SUM'').
We denote the numerator of the density by $w$, and denominator of the density by $s$.
Let $\bfC$ be a cube. Suppose that we have a dimension `is\_test' which splits the data
into test (`is\_test'=True) and control (`is\_test'=False).
Therefore, at the population level, we have population metrics
for test ($w^p_t$, $s^p_t$) and control ($w^p_c$ and $s^p_c$),
which induces a difference:
\begin{align}
  \label{formula:population-ratio-diff}
  C_\rho = \frac{w^p_t}{s^p_t} - \frac{w^p_c}{s^p_c}
\end{align}

Let $r$ be a region in the cube. At the region level, we have region metrics
for test ($w^r_t$ and $s^r_t$) and for control ($w^r_c$ and $s^r_c$),
which induces metric changes
$\Delta w^r = w_t^r - w_c^r$ and $\Delta s^r = s_t^r - s_c^r$.
We consider the following question:
\begin{center}
  \emph{Can we examine the region level metric change and the population level metric change,
  and compute a real value to assign to the region, to ``model'' the contribution of the region
  to the population level density change?}
\end{center}

Note that the answer is not trivial because the ratio nature of the density metric.
The answer to the question can be implemented as a region analysis model for cube crawling
to help find regions that contribute significantly to the change of a density metric at the top level.
We give a principled answer to the question based on the Aumann-Shapley method.
To apply the Aumann-Shapley method, we consider the following function:
\begin{align}
  F(w, s, w', s') = \frac{w + w'}{s + s'}
\end{align}
where we should think of $w,s$ are mass and volume in a region,
and $w',s'$ as mass and volume
in the complement of the region (i.e., outside of the region).
Importantly, $F$ gives a ``region-population'' modeling for attribution: For each region,
we only need the metric values of the region, and of the population, to compute the four
values $w, s, w', s'$.

With the above, then the control (base) point for Aumann-Shapley method is exactly
$P_0 = (w_c^r, s_c^r, w_c^p-w_c^r, s_c^p-s_c^r)$,
and the test (comparison) point is exactly
$P_1 = (w_t^r, s_t^r, w_t^p-w_t^r, s_t^p-s_t^r)$.
Our goal is thus to compute attribution for the first and second
coordinates (where the first coordinate changes from $w_c^r$ to $w_t^r$,
and the second coordinate changes from $s_c^r$ to $s_t^r$).
To apply the Aumann-Shapley method, we form a path:
\begin{align}
  \label{eq:path}
  h(t) = (1-t) \cdot P_0 + t \cdot P_1 \qquad(0 \le t \le 1)
\end{align}
Let $g(t) = F(h(t))$, therefore we have (this is the gist of Aumann-Shapley method)
\begin{align}
  \label{eq:AS-path-integral}
  F(P_1) - F(P_0)
  = g(1) - g(0)
  = \int_0^1 g'(t) dt
  = \int_0^1 \sum_i
  \frac{\partial F}{\partial z_i}\frac{\partial z_i}{\partial t}dt
  = \sum_i \left(\int_0^1
  \frac{\partial F}{\partial z_i}\frac{\partial z_i}{\partial t}dt\right)
\end{align}
where $z_i = h(t)_i$ ($i=1,2,3,4$) are the variables of $F$,
that is, parameterizations:
\begin{itemize}
\item $w(t) = z_1 = h(t)_1 = (1-t)w_c^r + tw_t^r$.
\item $s(t) = z_2 = h(t)_2 = (1-t)s_c^r + ts_t^r$.
\item $w'(t) = z_3 = h(t)_3 = (1-t)(w_c^p-w_c^r) + t(w_t^p-w_t^r)$.
\item $s'(t) = z_4 = h(t)_4 = (1-t)(s_c^p-s_c^r) + t(s_t^p-s_t^r)$.
\end{itemize}

\subsection{Attribution of numerator}

Therefore the attribution to the first coordinate $z_1$, is simply
\begin{align*}
  \int_0^1\frac{\partial F}{\partial z_1}\frac{\partial z_1}{\partial t}dt
\end{align*}
We note that
\begin{align*}
  \frac{\partial z_1}{\partial t} = w_t^r - w_c^r
\end{align*}
and
\begin{align*}
  \frac{\partial F}{\partial z_1} = \frac{1}{s(t)+s'(t)}
  = \frac{1}{(1-t)s_c^p + t s_t^p}
  = \frac{1}{(s_t^p-s_c^p)t + s_c^p}
\end{align*}
Therefore the attribution of the first coordinate is
\begin{align}
  \label{eq:numerator-attribution}
  \begin{split}
    \int_0^1\frac{w_t^r-w_c^r}{s(t)+s'(t)}dt
    = &(w_t^r-w_c^r)\int_0^1\frac{1}{(s_t^p-s_c^p)t + s_c^p}dt \\
    = &\frac{w_t^r-w_c^r}{s_t^p-s_c^p}
    \ln\Big((s_t^p-s_c^p)t + s_c^p\Big)\Big\rvert_0^1 \\
    = &(w_t^r-w_c^r)\frac{\ln(s_t^p) - \ln(s_c^p)}{s_t^p-s_c^p}
  \end{split}
\end{align}

\subsection{Attribution of denominator}
Now we turn to the attribution of the second coordinate, which is
\begin{align*}
  \int_0^1\frac{\partial F}{\partial z_2}\frac{\partial z_2}{\partial t}dt
\end{align*}
where
\begin{align*}
  \frac{\partial z_2}{\partial t} = s_t^r - s_c^r
\end{align*}
and
\begin{align}
  \label{eq:1}
  \frac{\partial F}{\partial z_2} = -\frac{w(t) + w'(t)}{(s(t) + s'(t))^2}
  = -\frac{(1-t)w_c^p + tw_t^p}{\Big((1-t)s_c^p+ts_t^p\Big)^2}
  = -\frac{(w_t^p-w_c^p)t+w_c^p}{\Big((s_t^p-s_c^p)t + s_c^p\Big)^2}
\end{align}
Let $f(t) = \frac{1}{(s_t^p-s_c^p)\Big((s_t^p-s_c^p)t + s_c^p\Big)}$,
$g(t) = (w_t^p-w_c^p)t+w_c^p$, then
\begin{align*}
  (\ref{eq:1}) = & \int_0^1 f'g dt \\
  = & fg\Big\rvert_0^1 - \int_0^1fg'dt\qquad\qquad\text{(Integration by parts)}\\
  = & \frac{1}{s_t^p-s_c^p}\Big(\frac{w_t^p}{s_t^p} - \frac{w_c^p}{s_c^p}\Big)
      - \frac{w_t^p-w_c^p}{s_t^p-s_c^p}\int_0^1\frac{1}{(s_t^p-s_c^p)t+s_c^p}dt\\
  = & \frac{1}{s_t^p-s_c^p}\Big(\frac{w_t^p}{s_t^p} - \frac{w_c^p}{s_c^p}\Big)
      - \frac{w_t^p-w_c^p}{(s_t^p-s_c^p)^2}
      \ln\Big((s_t^p-s_c^p)t+s_c^p\Big)\Big\rvert_0^1 \\
  = & \frac{1}{s_t^p-s_c^p}\Big(\frac{w_t^p}{s_t^p} - \frac{w_c^p}{s_c^p}\Big)
      - \frac{w_t^p-w_c^p}{(s_t^p-s_c^p)^2}\Big(\ln(s_t^p) - \ln(s_c^p)\Big)
\end{align*}
Therefore we get attribution of denominator:
\begin{align}
  \label{eq:denominator-attribution}
  \begin{split}
    (s_t^r-s_c^r)\bigg\{
    \frac{1}{s_t^p-s_c^p}\Big(\frac{w_t^p}{s_t^p} - \frac{w_c^p}{s_c^p}\Big)
    - \frac{w_t^p-w_c^p}{(s_t^p-s_c^p)^2}\Big(\ln(s_t^p) - \ln(s_c^p)\Big)
    \bigg\}
  \end{split}
\end{align}

\subsection{Combining everything together}

The theory of Aumann-Shapley attribution says that the attribution of
$z_1$ and $z_2$ is the sum of their individual attributions, this thus
gives
\begin{align}
  \label{eq:region-attribution}
  \begin{split}
    (\ref{eq:numerator-attribution}) + (\ref{eq:denominator-attribution})
    = & (w_t^r-w_c^r)\frac{\ln(s_t^p) - \ln(s_c^p)}{s_t^p-s_c^p} +
        (s_t^r-s_c^r)\bigg\{
        \frac{1}{s_t^p-s_c^p}\Big(\frac{w_t^p}{s_t^p} - \frac{w_c^p}{s_c^p}\Big)
        - \frac{w_t^p-w_c^p}{(s_t^p-s_c^p)^2}\Big(\ln(s_t^p) - \ln(s_c^p)\Big)
        \bigg\}\\
        = & \frac{(w_t^r-w_c^r)(s_t^p-s_c^p) -
          (w_t^p-w_c^p)(s_t^r-s_c^r)}{(s_t^p-s_c^p)^2}
        \Big(\ln(s_t^p) - \ln(s_c^p)\Big) +
        \frac{s_t^r-s_c^r}{s_t^p-s_c^p}
        \Big(\frac{w_t^p}{s_t^p} - \frac{w_c^p}{s_c^p}\Big)
  \end{split}
\end{align}

\subsection{Completeness property of our region attributions}

Note that we have used a particular way to perform the attribution: For every single region,
we form a function to model the region-population relationship and by analyzing that function,
we compute a value as attribution for the region; and we repeat this for all regions.
We now show that our method has some nice completeness properties.
To see this, we will denote by $\ras(\gamma)$ the attribution computed by
our algorithm for region $\gamma$ (RAS stands for ``Region Aumann-Shapley'').
We observe that the population level metrics are constants,
so the attribution formulas can be rewritten as follows:

\begin{align}
  \label{formula:AS-simplified}
  \bigg[(s_t^p-s_c^p)(w_t^r - w_c^r) - (w_t^p-w_c^p)(s_t^r-s_c^r)\bigg] \cdot K 
  + \left(\frac{s_t^r-s_c^r}{s_t^p-s_c^p}\right) \cdot C_\rho,
\end{align}

where $K$ is some constant only depending on the population, which we don't care in the following;
and $C_\rho$ is actually exactly the \emph{population density change}.
Now, we observe the following ``summable'' property:

\begin{lemma}[\textbf{Region attribution is additive}]
  Consider two regions $\beta, \gamma$ that are disjoint (namely the data points involved are disjoint,
  which we write as $\beta \cap \gamma = \emptyset$). Then
  $\ras(\beta \cup \gamma) = \ras(\beta) + \ras(\gamma)$.
\end{lemma}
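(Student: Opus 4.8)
The plan is to read off additivity directly from the simplified closed form (\ref{formula:AS-simplified}), using only two facts: that the formula is linear in the region-dependent quantities, and that $w$ and $s$ are SUM-aggregatable. First I would isolate the region dependence in (\ref{formula:AS-simplified}). The only quantities there that change with the region $r$ are the two differences $\Delta w^r := w_t^r - w_c^r$ and $\Delta s^r := s_t^r - s_c^r$; every other factor — the constant $K$, the population denominators $s_t^p - s_c^p$ and $s_c^p,s_t^p$, and the population density change $C_\rho$ — depends only on the whole population and is therefore the same for all regions. Consequently $\ras$ is a fixed linear functional of $(\Delta w^r,\Delta s^r)$: writing $A = K(s_t^p - s_c^p)$ and $B = -K(w_t^p - w_c^p) + \tfrac{C_\rho}{s_t^p - s_c^p}$, formula (\ref{formula:AS-simplified}) says exactly $\ras(r) = A\,\Delta w^r + B\,\Delta s^r$.

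Second I would use the SUM-aggregatability of the numerator $w$ and denominator $s$ of the density metric. Here a region $r$ is identified with the set of base tuples it filters, and $w_t^r$ is the SUM of the numerator over those tuples lying in the test segment (similarly $w_c^r, s_t^r, s_c^r$). Hence for regions $\beta,\gamma$ with disjoint tuple sets, $\beta \cap \gamma = \emptyset$, each of these aggregates splits over the union: $w_t^{\beta \cup \gamma} = w_t^\beta + w_t^\gamma$, and likewise for $w_c, s_t, s_c$. Subtracting the control aggregate from the test aggregate gives $\Delta w^{\beta \cup \gamma} = \Delta w^\beta + \Delta w^\gamma$ and $\Delta s^{\beta \cup \gamma} = \Delta s^\beta + \Delta s^\gamma$.

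Finally I would combine the two observations: by linearity of the functional from the first step,
\[
\ras(\beta \cup \gamma) = A\,\Delta w^{\beta \cup \gamma} + B\,\Delta s^{\beta \cup \gamma} = A(\Delta w^\beta + \Delta w^\gamma) + B(\Delta s^\beta + \Delta s^\gamma) = \ras(\beta) + \ras(\gamma),
\]
which is the claim. I do not expect a genuine obstacle in the computation; the only point needing care is bookkeeping around the meaning of $\beta \cup \gamma$ and of disjointness. I would state explicitly that $\beta \cup \gamma$ refers to the union of the underlying base-tuple sets and that $\beta \cap \gamma = \emptyset$ is precisely the condition under which each SUM aggregate decomposes — once that is pinned down, additivity is an immediate consequence of (\ref{formula:AS-simplified}). (The same argument will also be the engine for the completeness property, since taking $\beta \cup \gamma$ over a partition of the population collapses $\Delta w^r, \Delta s^r$ to the population differences.)
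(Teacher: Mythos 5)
Your proposal is correct and follows essentially the same route as the paper's own proof: exploit the fact that formula (\ref{formula:AS-simplified}) is linear in the region-level quantities $w_t^r-w_c^r$ and $s_t^r-s_c^r$ (with coefficients depending only on the population), and that disjointness makes these SUM-aggregated quantities add over $\beta \cup \gamma$. Your version merely makes explicit the linear-functional bookkeeping that the paper leaves as ``one can see.''
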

\begin{proof}
  At the region level, we have the metrics for $\beta, \gamma$, which is:
  $\beta: (w^\beta_c, s^\beta_c) \rightarrow (w^\beta_t, s^\beta_t)$, and
  $\gamma: (w^\gamma_c, s^\gamma_c) \rightarrow (w^\gamma_t, s^\gamma_t)$.
  Because $\beta, \gamma$ are disjoint, therefore for the union we have:

  $$\beta \cup \gamma: (w^\beta_c+w^\gamma_c, s^\beta_c+s^\gamma_c)
  \rightarrow (w^\beta_t+w^\gamma_t, s^\beta_t+s^\gamma_t).$$

  So we can compute $\ras(\beta \cup \gamma)$ using formula (\ref{formula:AS-simplified}),
  and one can see that it is just sum of $\ras(\beta)$ and $\ras(\gamma)$.
\end{proof}

Note that we may not be able to represent $\beta \cup \gamma$ as a region,
so the conclusion of this lemma is quite nontrivial.
In fact, it immediately implies the following two:

\begin{corollary}[\textbf{Global population completeness}]
  Suppose that we have a collection $\cal F$ of \emph{disjoint} regions,
  where these regions ``sum up'' to the population. Then
  \begin{align*}
      \sum_{r \in {\cal F}}\ras(r) = C_\rho,
  \end{align*}
  where $C_\rho$ is the top level density change.
\end{corollary}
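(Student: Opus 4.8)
The plan is to bootstrap from the additivity lemma proved above together with the closed form~(\ref{formula:AS-simplified}). First I would observe that formula~(\ref{formula:AS-simplified}) does not actually depend on $\gamma$ being a syntactic region: it is a function only of the four region-level metric values $(w_c^r, s_c^r) \to (w_t^r, s_t^r)$, which are well defined for \emph{any} subset of the data points. So I would read $\ras$, via that closed form, as being defined on arbitrary data subsets, agreeing with the original path-integral definition on genuine regions; the additivity lemma then reads as the purely algebraic identity that $\ras$ is additive over disjoint unions of such subsets.

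Next I would upgrade the two-region lemma to a finite collection by induction on $|{\cal F}|$. The base case $|{\cal F}| = 1$ is trivial. For the inductive step, pick any $r_0 \in {\cal F}$; pairwise disjointness of ${\cal F}$ makes $r_0$ disjoint from $\bigcup_{r \in {\cal F} \setminus \{r_0\}} r$, so the lemma gives $\ras\big(\bigcup_{r \in {\cal F}} r\big) = \ras(r_0) + \ras\big(\bigcup_{r \in {\cal F} \setminus \{r_0\}} r\big)$, and the inductive hypothesis applied to ${\cal F} \setminus \{r_0\}$ finishes it. Hence $\sum_{r \in {\cal F}} \ras(r) = \ras\big(\bigcup_{r \in {\cal F}} r\big)$.

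Finally, since by hypothesis the regions of ${\cal F}$ partition the population, $\bigcup_{r \in {\cal F}} r$ is the whole population, i.e., the empty region $[]$. For $\gamma = []$ the region-level metrics coincide with the population-level ones: $w_c^r = w_c^p$, $w_t^r = w_t^p$, $s_c^r = s_c^p$, $s_t^r = s_t^p$. Substituting into~(\ref{formula:AS-simplified}), the bracket multiplying $K$ becomes $(s_t^p-s_c^p)(w_t^p-w_c^p) - (w_t^p-w_c^p)(s_t^p-s_c^p) = 0$, and the second summand becomes $\big(\frac{s_t^p - s_c^p}{s_t^p - s_c^p}\big) C_\rho = C_\rho$, so $\ras([]) = C_\rho$, as claimed.

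The main obstacle, though a mild one, will be the bookkeeping around unions of regions that need not themselves be representable as regions: one must make sure it is the \emph{algebraic} closed form~(\ref{formula:AS-simplified}), not the path-integral definition, that is being summed, so that additivity and the induction never require re-running the Aumann-Shapley construction on non-region sets. Granting that, the corollary is a single substitution away from the lemma.
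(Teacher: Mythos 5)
Your proof is correct. It takes the route the paper only gestures at in its opening sentence (``this immediately follows from the summable result'') and fills in the two pieces that sentence leaves implicit: first, that $\ras$ as given by the closed form~(\ref{formula:AS-simplified}) is really a function of the four region-level metric values and hence extends to arbitrary disjoint unions of data subsets, so the two-region additivity lemma can be iterated by induction to give $\sum_{r\in\mathcal F}\ras(r)=\ras(\text{population})$; second, the completeness check that plugging the population metrics into~(\ref{formula:AS-simplified}) kills the $K$-bracket and leaves exactly $C_\rho$. The paper's written-out argument is instead a direct summation: it sums~(\ref{formula:AS-simplified}) over $r\in\mathcal F$ and uses the partition identities $\sum_r(w_t^r-w_c^r)=w_t^p-w_c^p$ and $\sum_r(s_t^r-s_c^r)=s_t^p-s_c^p$, so the $K$-terms cancel and the coefficients of $C_\rho$ sum to $1$; this avoids any induction or extension of $\ras$ beyond regions, at the cost of redoing the linearity computation that the additivity lemma already encapsulates. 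Your version buys a cleaner logical structure (lemma $\Rightarrow$ corollary, plus one evaluation at the empty region) and makes explicit the bookkeeping about non-region unions that the paper's lemma itself relies on; like the paper, it implicitly assumes the non-degenerate case $s_t^p\neq s_c^p$, which the paper treats separately later.
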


\begin{proof}
  This immediately follows from the ``summable'' result above. Here we give a alternate direct proof.
  Summing over the attributions of regions $r \in {\cal F}$, which gives:
  \begin{align*}
    \sum_{r \in {\cal F}} \left(\bigg[(s_t^p-s_c^p)(w_t^r - w_c^r) - (w_t^p-w_c^p)(s_t^r-s_c^r)\bigg] \cdot K 
    + \left(\frac{s_t^r-s_c^r}{s_t^p-s_c^p}\right) \cdot C_\rho\right),
  \end{align*}
  We observe that this gives exactly $C_\rho$, because the first summand gives rise to a sum that is zero,
  while for the second summand, $\sum_r (s_t^r-s_c^r)/(s_t^p-s_c^p) = 1$ (exercise).
  This gives global population completeness.
\end{proof}

\begin{corollary}[\textbf{Subpopulation completeness}]
  Suppose we have a region $\gamma$ which has density change $C^\gamma_\rho$. Now, suppose $\gamma$
  can be partitioned into a family of disjoint regions $\cal F$.
  Then $\sum_{r \in {\cal F}}\ras(r) = C^\gamma_\rho$.
\end{corollary}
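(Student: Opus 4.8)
The plan is to read this corollary as the \emph{relativized} form of the preceding Global population completeness corollary: the Aumann--Shapley derivation of Section~\ref{sec:density-model-derivation} never used that the reference population is the entire dataset --- only that the region-ambient model $F(w,s,w',s')=(w+w')/(s+s')$ recovers the reference density from the region's mass/volume $(w,s)$ and the ambient's mass/volume $(w',s')$. So I would simply re-run that derivation with $\gamma$ in the role the whole population played before: for $r\subseteq\gamma$ take the control point $P_0=(w^r_c,\,s^r_c,\,w^\gamma_c-w^r_c,\,s^\gamma_c-s^r_c)$ and test point $P_1=(w^r_t,\,s^r_t,\,w^\gamma_t-w^r_t,\,s^\gamma_t-s^r_t)$, so that $w',s'$ now model the mass and volume of $\gamma\setminus r$. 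The path integral produces the analogue of~(\ref{eq:region-attribution}), equivalently of~(\ref{formula:AS-simplified}), with every population superscript $p$ replaced by $\gamma$; this is the quantity denoted $\ras(r)$ in the statement.

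Next I would sum over $r\in\mathcal F$ and use that $\mathcal F$ partitions $\gamma$, so $\sum_{r}(w^r_t-w^r_c)=w^\gamma_t-w^\gamma_c$ and $\sum_{r}(s^r_t-s^r_c)=s^\gamma_t-s^\gamma_c$ (and likewise for the raw endpoint values). Exactly the cancellation from the proof of Global population completeness then applies: the coefficient of the $\bigl(\ln s^\gamma_t-\ln s^\gamma_c\bigr)$ term telescopes to $(s^\gamma_t-s^\gamma_c)(w^\gamma_t-w^\gamma_c)-(w^\gamma_t-w^\gamma_c)(s^\gamma_t-s^\gamma_c)=0$, while the remaining term contributes $\bigl(\sum_r (s^r_t-s^r_c)/(s^\gamma_t-s^\gamma_c)\bigr)\cdot C^\gamma_\rho = C^\gamma_\rho$, giving $\sum_{r\in\mathcal F}\ras(r)=C^\gamma_\rho$. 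A quicker alternative route is to observe that the additive lemma holds verbatim for these $\gamma$-relative attributions --- its proof only uses that disjoint regions' masses and volumes add, and the relativized formula is still linear in $(w^r_t-w^r_c)$ and $(s^r_t-s^r_c)$ --- so $\sum_{r\in\mathcal F}\ras(r)=\ras(\gamma)$, and then verify the base case $\ras(\gamma)=C^\gamma_\rho$ by substituting $r=\gamma$ into the relativized formula, which kills the logarithmic term (its coefficient vanishes) and leaves $C^\gamma_\rho$.

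The only subtlety --- the closest thing to an obstacle --- is the implicit overloading of the symbol $\ras$: it was introduced in Section~\ref{sec:density-model-derivation} with the global population as reference, and the corollary silently reuses it with $\gamma$ as reference. I would make this explicit in the proof, and justify the substitution by the observation above: the construction is agnostic to which ``population'' is being decomposed, requiring only that the region-ambient model be additive in $w$ and in $s$, so swapping $p$-level endpoints for $\gamma$-level endpoints is legitimate and changes nothing in the algebra. With that noted, Global population completeness is literally the special case $\gamma=$ the whole population, and everything after the reinterpretation is the same telescoping sum already carried out there, so there is nothing genuinely hard left to do.
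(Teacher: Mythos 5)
Your proof is correct, and it is essentially the argument the paper intends: the paper gives no separate proof of this corollary, presenting it (like global population completeness) as an immediate consequence of the additivity lemma, and your relativized telescoping sum is exactly the global-completeness computation with $\gamma$ playing the role of the population. Where you go beyond the paper is in making the relativization explicit, and that is not a cosmetic point. With $\ras$ taken literally as defined in Section~\ref{sec:density-model-derivation} (reference population $p$), the additivity lemma only yields $\sum_{r\in\mathcal F}\ras(r)=\ras(\gamma)$, and $\ras(\gamma)\neq C^\gamma_\rho$ in general: for instance, if $s^\gamma_t=s^\gamma_c=s^\gamma$ but the mass shifts, formula~(\ref{formula:AS-simplified}) gives $\ras(\gamma)=(w^\gamma_t-w^\gamma_c)\bigl(\ln(s^p_t)-\ln(s^p_c)\bigr)/(s^p_t-s^p_c)$, whereas $C^\gamma_\rho=(w^\gamma_t-w^\gamma_c)/s^\gamma$, and these differ unless the population volumes happen to match $s^\gamma$. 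So the corollary holds precisely under your reading, in which each $\ras(r)$, $r\in\mathcal F$, is computed with $\gamma$ as the ambient/reference population, and your observation that the Aumann--Shapley derivation uses nothing about $p$ beyond the region-ambient decomposition $F(w,s,w',s')=(w+w')/(s+s')$ is exactly the justification needed for that substitution. Both of your routes are sound: re-running the telescoping cancellation with superscript $p$ replaced by $\gamma$, or re-proving additivity in the relativized setting and checking the base case $\ras(\gamma)=C^\gamma_\rho$ via the same vanishing-coefficient computation the paper uses to get $\ras(p)=C_\rho$. In short, your write-up supplies the one step the paper leaves implicit (and which, stated literally, it needs).
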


\subsection{The degenerate case}
Clearly, so far we derive the closed form of path integration based on
the assumption that$s_t^p \neq s_c^p$. The case where $s_t^p = s_c^p = s^p$
(note that we introduced a new notation $s^p$) is the degenerate case.
In that case, only mass changes from $w_c^p$ to $w_t^p$ but the volume
does not change. Therefore the contribution of numerator to the change
degenerates to (precisely it is only that the path integration changes):
\begin{align}
  \label{eq:degenerate-numerator-attribution}
  (w_t^r-w_c^r)\int_0^1\frac{1}{s_c^p}dt = \frac{w_t^r - w_c^r}{s^p}
\end{align}
For denominator, the attribution degenerates to
\begin{align}
  \label{eq:degenerate-denominator-attribution}
  -\frac{(s_t^r-s_c^r)}{(s_c^p)^2}\int_0^1\Big((w_t^p-w_c^p)t+w_c^p\Big)dt
  = -\frac{(s_t^r-s_c^r)(w_t^p+w_c^p)}{2(s_c^p)^2}
  = -\frac{(s_t^r-s_c^r)(w_t^p+w_c^p)}{2(s^p)^2}
\end{align}
Summing up this gives:
\begin{align}
  \label{eq:degnerate-region-attribution}
  \frac{w_t^r - w_c^r}{s^p} - \frac{(s_t^r-s_c^r)(w_t^p+w_c^p)}{2(s^p)^2}
\end{align}

\subsection{Teasing out churn effects}

In this section we consider the following setup, for every region we can
identify a set of ``entities'' where we can categorize traffic in a region
by these entities. Specifically, we consider three kinds of traffic:
\begin{enumerate}
\item Entities that only appear in the control, but not in test
  (control-only, shorthand CO below).
\item Entities that only appear in the test, but not in control
  (test-only, shorthand TO below).
\item Entities that appear in both test and control
  (test-control, shorthand TC below).
\end{enumerate}
For certain applications, it is important to do attribution separately for
each type of the traffic.

There is a simple reduction to reduce this attribution to our derivation
above. Basically, for every region $r$ (think about
$r=$(Domain=`facebook.com') above, this region is (implicitly) split
into three finer grained regions $r^{\rm CO}$, $r^{\rm CT}$, and
$r^{\rm TO}$. These three regions have disjoint traffic (because it is
identified by entities). Therefore, we can apply the derivation above
to each of the traffic in these three regions to compute the attributions,
respectively.

\section{Frequent Itemset Mining via cube crawling}
\label{sec:fim-model-code}
We describe FrequentItemsetModel, which can be used in cube crawling to
solve the classic Frequent Itemset Mining problem.
To this end, recall that the input data to frequent itemset mining is
a set of transactions, encoded as follows:

\begin{table}[htb]
  \centering
  \subfloat[]{
  \centering\small
    \begin{tabular}{|l|l|}
      \toprule
      \textbf{Id} & \textbf{Transaction}\\\hline\hline
      1 & \{ Beer, Diaper\} \\ \hline
      2 & \{ Beer, Diaper \} \\ \hline
      3 & \{ Rum, Whisky\} \\ \hline
      ... & ... \\\hline
      \bottomrule
    \end{tabular}
    \label{tab:transaction-dataset}
  }
  \subfloat[]{
  \centering\small
  \begin{tabular}{|l|l|l|l|l|}
      \toprule
      \textbf{Beer} & \textbf{Diaper} & \textbf{Rum} & \textbf{Whisky} & \textbf{Count}\\\hline\hline
       $\pmb{1}$ & $\pmb{1}$ & 0 & 0 & $\pmb{1}$ \\ \hline
       $\pmb{1}$ & $\pmb{1}$ & 0 & 0 & $\pmb{1}$ \\ \hline
       0 & 0 & $\pmb{1}$ & $\pmb{1}$ & $\pmb{1}$ \\ \hline
      ... & ... & ... & ... & ... \\\hline
      \bottomrule
    \end{tabular}
    \label{tab:encoding-df}
  }
  \caption{\small
  (a): An example transaction data set,
  for demonstrating FrequentItemsetModel.
  (b): `encoding\_df' that encodes a
  transaction data set, so that the ``items''
  in the transaction data set become dimensions
  in the encoding\_df. The first three rows
  are encoding of the transactions of
  the first three transactions in 
  Table~\ref{tab:transaction-dataset}.
}
  \label{tab:transactions-and-encoding}
\end{table}

To fit the problem into cube crawling, we first encode the items
as dimensions of a new table, `encoding\_df'.
Such encoding can be achieved using common libraries 
(e.g., TransactionEncoder in mlextend~\cite{raschkas_2018_mlxtend}).
Note that we attach a measure column, `Count', so that each row
(i.e. a transaction) has a count 1 (this can be generalized
by attaching arbitrary weight, instead of a count).
We can therefore now define a BaseTableAggregatableCube $\bfC_{\text{FIM}}$,
where: \textbf{(i)} The base table is `encoding\_df', and
\textbf{(ii)} The aggregation for Count is SUM.

\textbf{FrequentItemsetModel}.
Recall that for an itemset (a subset of all items), frequent itemset mining
computes a metric, called $\sf support$, as the fraction of transactions that contain this itemset.
To compute this with cube crawling, we observe that an itemset corresponds exactly to a region in the `encoding\_df',
where the dimensions in the region are exactly items in the itemset,
and all dimensions in the region takes value $\pmb{1}$.
We call such regions the ``itemset regions''.
Further, for an itemset region, such as (Beer=$\pmb{1}$, Diaper=$\pmb{1}$),
the aggregated Count (i.e., SUM(Count)) gives exactly the number of times
itemset \{Beer, Diaper\} appears in all transactions.
This thus gives FrequentItemsetModel:
\textbf{(1)} The model requests Count as a measure feature.
Note that, for example, $\bfC_{\text{FIM}}({\rm Count})$
is precisely the total number of transactions $N$ (i.e., number of rows) in the base table `encoding\_df'.
\textbf{(2)} At a region $\gamma$, If the region $\gamma$ contains a dimension
that is assigned value $0$, then we simply stop exploration from there,
since that region does not correspond to a itemset.
\textbf{(3)} Otherwise, the region $\gamma$ corresponds to an itemset,
and the region feature Count is the number $N_\gamma$ of number of
transactions that contain this itemset.
\textbf{(4)} Finally, $\sf support$ of this itemset is simply $N_\gamma/N$.
As a result, CubeCrawl(FrequentItemsetModel) computes the support of all itemset regions, as desired.
Detailed code is given below.

\begin{lstlisting}[language=Python]
class FrequentItemsetMiningModel(region_analysis_model.RegionAnalysisModel):
  # Initializer.
  def __init__(self, count_column, signal_names, apriori_signal_names):
    super().__init__(
        signal_names=signal_names,
        apriori_signal_names=apriori_signal_names,
        prefix=prefix)

    self._count_column = count_column
    self._metric_feature_specifiers = [count_column]
    self._total_count = 0

  @property
  def population_df(self):
    """Returns population data with which we compare with data."""
    return self._population_df

  @population_df.setter
  def population_df(self, population_df):
    self._population_df = population_df
    self._total_count = self._population_df[self._count_column]

  def __call__(self, data_frame, region):
    """Returns an Signals object for FIM.

    Args:
      data_frame: The data frame to compute attention signals. It should be
        noted that a wide table must be used as data_frame for this model.
      region: A tuple of attribute-value pairs representing the region to
        evaluate, e.g.: (('beer', True), ('Rum', False)).
    """
    signals = signals.Signals(support=0, continue_exploration=1)
    signals.signal_names = self._signal_names

    if any(not value for _, value in region):
      signals['continue_exploration'] = 0
    else:
      signals['continue_exploration'] = 1
      if region:
        signals['support'] = (
            data_frame[self._count_column].sum() / self._total_count)
      else:
        # Set the support of order-0 region to 0.
        signals['support'] = 0
    return signals
\end{lstlisting}

\section{Comparison with OLAP}
\label{sec:olapcomparison}
In this section, we will compare the standard OLAP framework with the HoCA framework introduced in this paper. First, we present the OLAP framework and its foundational building blocks.

\textbf{Online analytical processing (OLAP, for short)}.
OLAP broadly refers to the technology that enables fast \emph{multi-dimensional analytics} (MDA).
In MDA, the data schema is divided into two parts: categorical attributes (referred to as the dimensions) and numeric attributes (called measures).
A typical OLAP query seeks to understand the relationship of the measures with respect to a set of dimensions.
For example, consider a business warehouse with dimensions as \textsf{part}, \textsf{supplier}, \textsf{customer}, \textsf{date}.
The measure of interest is the total sales \textsf{sales}. An OLAP query over this warehouse may consist of aggregating the \textsf{sales}
for each (\textsf{part}, \textsf{supplier}, \textsf{customer}) combination.
A business user may look at the sales data and decide to breakdown the sales figure by \emph{rolling-up} to only the \textsf{customer} dimension
and look at the total sales figure per customer. Similarly, it is also possible to add the \textsf{date} dimension (this is known as \emph{drilling-down})
to understand the sales figures for each customer per month/quarter/year.

The fundamental abstraction that allows for fast computation of OLAP queries is efficient processing of \emph{data cubes}.
Given a set of dimensions $D$ and a measure $M$, a user may ask an OLAP query that corresponds to a \texttt{GROUP BY} query with a subset of $D$
as the group by columns and an aggregation function over $M$. The number of all possible group-by queries is $2^{|D|}$.
Further, the cardinality of the result of each group-by query is also exponential in the size of the input table (in the worst case).
In typical applications, the materialized result can be of the order of several hundred gigabytes,
so development of efficient data cube implementation algorithms is extremely critical.
The problem of selecting data cubes to materialize in order to obtain fast query response time
while minimizing the cube computation resource requirement has been the subject of intense research in the database community.

\smallskip
\textbf{HoCA vs. OLAP}. We now highlight the key differences between our framework and OLAP. 

\textbf{\em Analytical problems}.
OLAP is designed for fast multidimensional analytics which are mainly on the ``forward'' direction,
in the sense that users instantiates actions on \emph{where to examine next}. By contrast,
HoCA is designed to solve \emph{search} problems over the cube view of data formed by the region expressed in the program.
The analytical problems solved by HoCA are usually in the ``backward'' direction:
For example in the CostPerClick attribution case, HoCA starts with the \emph{top-level} CostPerClick change,
and the analysis deconstructs the metric and search for the regions
that are major contributors to this change.
To solve the same Cpc attribution problem in OLAP, users have to use OLAP operators to perform an iterative,
but \emph{manual}, slice-and-dice analysis to find the insights.

\textbf{\em HoCA cubes vs. OLAP cubes}.
This difference in analytic questions induces vastly different considerations of
the cube design: In HoCA, cubes are designed to be a \emph{logical form} of data,
and HoCA operators are about higher-order, cube-to-cube transformations. These operators
can be composed in a HoCA program, where HoCA cubes flow in between. 
OLAP cubes are, on the other hand, designed for entire languages (e.g.~\cite{MDX})
for forward cube analytic capabilities. OLAP cubes are monolithic data objects,
and there is little sense of OLAP cubes explicitly flowing between operators.
Since one of the goals of HoCA is to build a framework that is easy to program in,
abstract cube and regions provides an intuitive way to model the sub-space that needs to be explored.
Further, observe that the notion of cellset is general enough to capture an OLAP data cube.
Indeed, the * in a Cellset can be seen as an equivalent of \texttt{ANY} in OLAP cubes\footnote{
Note that the cardinality of a data cube (which is $\Pi_{i \in [n]} (U_i+1)$) also matches that of a Cellset}.
A cellset is also more expressive than an OLAP cube since we do not require the measure to be generated
by a \emph{decomposable} aggregate function, which is typically the case in OLAP.
This property has important implications for compositions in HoCA,
in that given a materialized data cube,
one could interpret that as a CellsetCube, allowing users to write programs over the cube subsequently.

\textbf{\em Analysis and programming}.
HoCA provides a data model (a new logical form of data) and an algebraic framework.
HoCA provides novel \emph{programming experiences} and \emph{constructions},
such as Region Analysis Model programming, cube join, and recurrent cube crawling,
which are all beyond OLAP considerations.
To this end, we emphasize that HoCA cube crawling allows passing of information (via the population data for each model)
about the entire region induced by all the dimensions to the sub-regions crawled.
This is important to allow for computation of advanced signals such as {\tt aumann\_shapley\_attribution}
(which requires such population information, see Section~\ref{sec:density-model-derivation}). 
OLAP does not have a counterpart to allow for message passing between sub-cubes.

\textbf{A remark on similarities between HoCA and OLAP}.
We note a key similarity between the two frameworks that are likely
to allow several optimizations that have been designed for OLAP to improve HoCA's efficiency.
For a BaseTableGroupByCube, given a set of dimensions $D' = \{d_1, \dots d_k\} \subseteq D$, the number of the region spaces explored by HoCA
and the cardinality of the data cube formed by $D'$ is the same. This suggests that ROLAP based optimizations built for exploiting the overlap
between different data cubes could potentially be transferred to multiple HoCA programs exploring common regions.
Indeed, multiple HoCA programs running over the same input can be optimized jointly to minimize computation.
We leave the problem of building such an optimizer as a problem for future work.

\subsection{Expressing Constraint Verification using cube crawling in HoCA}

In this section, we will show how the problem of detecting integrity constraint violations can be naturally expressed as a HoCA program.
We will consider the following version of functional dependency detection: Consider a set of columns $\mathcal{S}$, and another set of
columns $\mathcal{T}$ (for concreteness, let $\mathcal{S} = \{X, Y\}$, and $\mathcal{T} = \{Z, W\}$). Now given a relation $R$,
we want to check whether there is a functional dependency $\mathcal{S} \rightarrow \mathcal{T}$ in $R$, that is:

\begin{align*}
  \Big[\forall t, t' \in R\Big] t[\mathcal{S}] = t'[\mathcal{S}] \Rightarrow t[\mathcal{T}] = t'[\mathcal{T}]
\end{align*}

That is, for any two tuples $t, t' \in R$ that have the same values when projected over attributes $\mathcal{S}$,
the projections of the two tuples on attribute $\mathcal{T}$ are also the same.
We show that there are various ways to program this using cube crawling in HoCA
(we will use some pseudo-code to simplify things and highlight ideas)

\textbf{Approach 1: Encoding counting as a cube metric}.
In this approach, we push the counting work into the cube abstraction, using a BaseTableGroupByCube.
For this, consider we define the following metric:

\begin{lstlisting}[language=Python]

distinct_target_count = COUNT(DISTINCT Z, W)
\end{lstlisting}

where $Z, W$ are the assumed members in $\mathcal{T}$. Therefore, for any region say $\gamma: (X=x, Y=y)$,
if we request the metric feature {\tt distinct\_target\_count}, we get the number of projections of $Z, W$ conditioned
on the values in $\mathcal{S}$. With this, we can consider a trivial model {\tt IdModel}, which simply requests
the metric feature {\tt distinct\_target\_count}, as follows:

\begin{lstlisting}[language=Python]

id_model = IdModel(metric_features=['distinct_target_count'])
\end{lstlisting}

Therefore, if {\tt distinct\_target\_count} is larger than 1 for any region, there is no functional dependency in $R$.
This can be encoded using the following cube crawling:

\begin{lstlisting}[language=Python]

CubeCrawl(
  data_cube=BaseTableGroupByCube(R),
  grouping_sets=[['X', 'Y'],],
  models=[id_model],
  thresholds = {distinct_target_count : 1.0})
\end{lstlisting}

Note that here we used the {\tt grouping\_sets} parameter to focus on grouping set $[X, Y]$, because we don't need
to consider any region of smaller degree.

\textbf{Approach 2: Counting entities in a model}.
In the approach, we program an region analysis model called the {\tt entity\_model}.
This model receives as parameters a set of dimensions called {\tt entity\_columns},
which in this case is just $\mathcal{T}$. Upon a region, the BaseTableGroupByCube projects the data to the region
$\gamma: (X=x, Y=y)$, and requests two dimension columns $\mathcal{T} = [Z, W]$. The semantics is simply computing
the group-by on these dimensions and return a table of the $Z, W$ pairs. Because of the groupby, these two columns
become the primary keys. Therefore, by counting the number of entities in the region feature table, say recording it
in a signal called {\tt entity\_count}, we can detect functional dependency failure if any region has
{\tt entity\_count} larger than one. This gives the following HoCA program:

\begin{lstlisting}[language=Python]
entity_model = EntityModel(
  entity_columns=['Z', 'W'],
  output_signals=['entity_count'])

CubeCrawl(
  data_cube=BaseTableGroupByCube(R),
  grouping_sets=[['X', 'Y'],],
  models=[entity_model],
  thresholds = {entity_count : 1.0})\end{lstlisting}
  
\textbf{Approach 3: An alternative model using more standard SQL semantics}.
In this approach, we augment the base relation with a new column say {\tt const\_one},
where it is a constant 1, and we put a SUM aggregation function. Then the we can program a different
region analysis model, say {\tt EntityMeasureModel}, which requests a set of entity columns,
and a set of entity measures, and then process them. In our case, for each region,
{\tt entity\_measure\_model} will receives a table where each row contains an entity and its corresponding
measures. Therefore again by just counting the number entities (ignoring the aggregated measure),
and recording the answer using a signal called {\tt entity\_count},
we can detect functional dependency failure if any region has more than one entities.
This gives the following program:

\begin{lstlisting}[language=Python]
entity_model = EntityMeasureModel(
  entity_columns=['Z', 'W'],
  entity_measure=['const_one']
  output_signals=['entity_count'])

CubeCrawl(
  data_cube=BaseTableGroupByCube(R),
  grouping_sets=[['X', 'Y'],],
  models=[entity_model],
  thresholds = {entity_count : 1.0})\end{lstlisting}

\section{More examples}
\label{sec:more-examples}
\subsection{Using cube crawling to generate pseudo-labels for deep learning}
We describe a case where cube crawling plays a key role in deep learning.
Suppose that we have a set of ``events'' that we want to perform a binary classification
(so class labels are $0/1$). However, unlike typical supervised learning setting,
there is no label on the training data (or in other words, the training data is entirely unlabeled).
In this case, one general strategy is the following:
\textbf{(1)} We first generate pseudo-labels (we call them plabels in the following),
\textbf{(2)} We then use these plabels together with the unlabeled training data to
train a deep learning model.
Now, the question reduces to: \emph{How to generate good plabels?}

\begin{figure}[htb]
  \centering
  \includegraphics[width=0.5\textwidth]{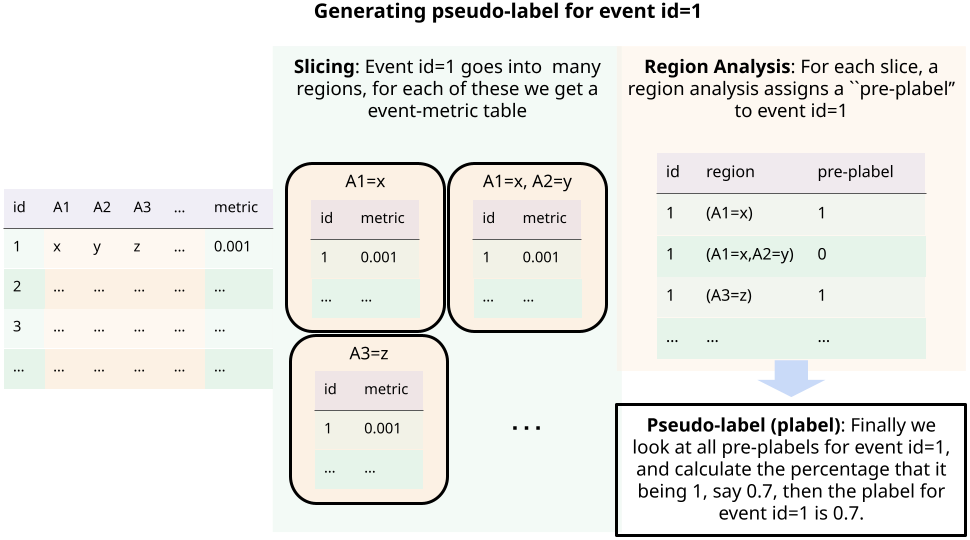}
  \caption{\small
  Cube crawling used for generating pseudo-labels (plabels)
  for a downstream deep-learning pipeline.}
  \label{fig:crawling-pseudolabels-dl}
\end{figure}

To answer the question, we note that the training data we have is organized as a relational table,
as shown on the left side of Figure~\ref{fig:crawling-pseudolabels-dl}.
Each row represents an event, and an event is described by its id, a bunch of attributes
(abstracted into `A1', `A2', ...,) and a single metric `metric' (to simplify things).
Now the idea is to use cube crawling to generate plabels: By considering different regions of the
abstract cube (induced by slicing on the base table on attributes), we can put a single event
(say event id=1) into different regions (``Slicing'' in Figure~\ref{fig:crawling-pseudolabels-dl}).
For each region we run a region analysis to check the behavior of all events in that region,
and based on that we assign `pre-plabels' (either 0 or 1) to all events in the region
(``Region Analysis'' in Figure~\ref{fig:crawling-pseudolabels-dl}).
After this step, for each event, we have an ``event table'' of its pre-plabels in different regions.
Finally, the plabel of an event is just the fraction that its pre-plabel is 1 in the respective event table.
Note that this is ``soft label'' which somehow captures the ``probability'' that we thinks it is 1.
In practice, we found that considering a large number of regions is crucial for this strategy to work,
since it gives different ``view points'' about the behavior of an event. Therefore the efficiency of
crawling is important. Also, our design allows machine learnists to focus on writing the statistical
analysis to assign pre-plabels.

\subsection{Cross rank correlation}
For the test period, we have $N$ data points of (Budget, Cost) pairs
$U = \{(B^t_i, R^t_i)\}_{i=1}^N$. Similarly, for the control period, we have
$V = \{B^c_j, R^c_j\}_{j=1}^M$ (note that there are $M$ points, and $N$ may not be equal to $M$).
The correlation formula is the following:

\begin{align}
  \label{eq:cross-rank-correlation}
  \text{CrossRankCorr}(U, V) = \frac{\sum_{i \in [N], j \in [M]}\text{sgn}\left(\Big(R^t_i-R^c_j\Big)\Big(B^t_i - B^c_j\Big)\right)}{NM}
\end{align}

\end{document}